\def\appendix
\equiv \bm{1} \psi_M(\cdot)$ for $\bm{1} \in \R^p$ (i.e., the vector-valued function $\bm{\psi}_M$ took on the same value in each index), so that the model was characterized by the parameters $\balpha_0, \btheta_0$, and three functions.
\Eop\Bigg[ \bigg\{n_k^{-1} \sum_{i \in \Ik} G_i^{2\gamma} h_1(\bX_i; \crossfitDataC)^2 \bigg\}^{1/2} \bigg\{n_k^{-1} \sum_{i \in \Ik} h_2(\bX_i; \crossfitDataC)^2 \bigg\}^{1/2} ~\Bigg|~ \crossfitDataC \Bigg] \\
\Vert\sqrt{n}\left( \hat\btheta - \btheta_0 \right) - V_n^{-1} G_n \right\Vert_{\infty} \CiP 0 \]
\bmu \in [L_2(P_0)(\Xcal)]^{p+2}$ and any $\btheta_{\model} \in \R^{|\model|}$:
\Eop\left[ \{\bZ - \bmu_{Z0}(\bX)\} \{Y - \mu_{Y0}(\bX)\}\right]$, we may state the following proposition.
    \Eop \left[ \epsilon_{\model} \left\{ \bZ(\model) - \bmu_{Z0}(\bX)(\model) \right\} \right] = \bzero.
    \Eop\left[ \nabla L_{\model}(\btheta_{0\model}; \bmu, \bO) \right] = \bzero.
\Eop\left[ \nabla R_{n\model}(\btheta_{0\model}) \right] = \Eop\left[ \nabla L_{\model}(\btheta_{0\model}; \bmu, \bO) \right].
\Eop\left[ L_{\model}(\btheta_{0\model}; \bmu, \bO) \right].
\cdot  \Ltwo{\hat\mu_{Zj}- \mu_{Z0j}}  = o_p((n-n_k)^{-1/2})
\vert \frac{1}{n} \sum_{k=1}^K \sum_{i \in \Ik} \big\{ \Lbar_{\model}\big(\btheta_{\model}; \hat\bmu_k, \bO \big) - \Lbar_{\model}\big(\btheta_{\model}; \bmu_0, \bO \big) \big\} \bigg\vert = o_p(n^{-1/2}) \\
\Vert \frac{1}{n} \sum_{k=1}^K \sum_{i \in \Ik} \big\{ \nabla L_{\model}\big(\btheta_{\model}; \hat\bmu_k, \bO \big) - \nabla L_{\model}\big(\btheta_{\model}; \bmu_0, \bO \big) \big\} \bigg\Vert_\infty = o_p(n^{-1/2}) \\
\vert \frac{1}{n} \sum_{k=1}^K \sum_{i \in \Ik} \big\{ \Lbar_{\model}\big(\btheta_{0\model} + n^{-1/2}\bu_{\model}; \hat\bmu_k, \bO \big) - \Lbar_{\model}\big(\btheta_{0\model} + n^{-1/2}\bu_{\model}; \bmu_0, \bO \big) \big\} \bigg\vert = o_p(n^{-1}).
\leq \LtwoPzero{\mu_{Y0} - \hat\mu_{Y}} + \sum_{j=1}^{p+1} \theta_{0\model}(\{j\}) \LtwoPzero{\mu_{Z0j} - \hat\mu_{Zj}},
\vert \frac{1}{n} \sum_{k=1}^K \sum_{i \in \Ik} G_i \big\{ \Lbar_{\model}\big(\btheta_{\model}; \hat\bmu_k, \bO \big) - \Lbar_{\model}\big(\btheta_{\model}; \bmu_0, \bO \big) \big\} \bigg\vert = o_{p^*}(n^{-1/2}) \\
\Vert \frac{1}{n} \sum_{k=1}^K \sum_{i \in \Ik} G_i \big\{ \nabla L_{\model}\big(\btheta_{\model}; \hat\bmu_k, \bO \big) - \nabla L_{\model}\big(\btheta_{\model}; \bmu_0, \bO \big) \big\} \bigg\Vert_\infty = o_{p^*}(n^{-1/2}) \\
\vert \frac{1}{n} \sum_{k=1}^K \sum_{i \in \Ik} G_i \big\{ \Lbar_{\model}\big(\btheta_{0\model} + n^{-1/2}\bu_{\model}; \hat\bmu_k, \bO \big) - \Lbar_{\model}\big(\btheta_{0\model} + n^{-1/2}\bu_{\model}; \bmu_0, \bO \big) \big\} \bigg\vert = o_{p^*}(n^{-1}).
\top \bar{\bG}_{n \model} + \bu_{\model}^\top \bH_n(\model) \bu_{\model} \\
\top \bar{\bG}_{n \model} + \bu_{\model}^\top \bH_0(\model) \bu_{\model} + o_p(1),
\CiD \mathbb{G}_1 \sim N(\bzero, \bV_{1 \model} )$ for
\Eop\left[ \epsilon_{1\model}^2 \left\{ \bZ_1(\model) - \bmu_{Z01}(\model) \right\}^{\otimes 2} \right],
\Vert\sqrt{n}(\tilde\btheta_{\model} - \btheta_{0 \model}) - \bH_0^{-1} \bar{\bG}_{n \model}\Vert_{\infty} = o_p(1)$. We may infer from the Continuous Mapping Theorem and Slutsky's Theorem
\CiD \mathbb{G}_2 \sim N\left( \bm{0}, \left\{ \bH_0(\model) \right\}^{-1} \bV_{1 \model} \left\{ \bH_0(\model) \right\}^{-1} \right).
\top \bar{\bG}_{n \model} + \bu_{\model}^\top \bH_0(\model) \bu_{\model} + o_p(1)\\
\Vert\sqrt{n}(\hat\btheta_{\model} - \btheta_{0 \model}) - \bH_0^{-1} \bar{\bG}_{n \model}\Vert_{\infty} = o_p(1)$. Recalling the previous known-$\bmu_0$ case, this implies both (i) $\sqrt{n}(\hat\btheta_{\model} - \tilde\btheta_{\model}) = o_p(1)$ and (ii) $\sqrt{n}(\hat \btheta_{\model} - \btheta_{0 \model}) \CiD \mathbb{G}_2$.
\bw\subj[j] \CiP 0 \\
\bw\subj[j] \CiP \infty.
\CiD \mathbb{G}_{1\Scal} \sim N(\bzero, V_{1\Scal})$, where $V_{1\Scal} = \Eop\big(\epsilon_{\Scal}^2 (\bZ_{\Scal} - \bmu_{Z0\Scal})^{\otimes 2} \big)$. 
\Vert \sqrt{n}\big\{\hat\btheta^L(\Scal) - \btheta_{0 \Scal}\big\} - \sqrt{n}\big\{\tilde\btheta(\Scal) - \btheta_{0 \Scal}\big\}\big\Vert_\infty = o_p(1)\\
\CiD \mathbb{G}_{2\Scal} \sim N(\bm{0}, C_{0\Scal}^{-1} V_{1\Scal} C_{0\Scal}^{-1}),
\equiv \bar{\bG}_{n \model^F}$ as defined in the proof of \Cref{thm:stage1-can}, and $\bH_n=n^{-1}\sum_{i=1}^n (\bZ_i - \bmu_{Z0i})^{\otimes 2}$ .
\bg\subj[j] = n^{-1} \lambda_n \hat\bw\subj[j] sgn\{\tilde\btheta^L\subj[j]\}$. By multiplying $sgn(\tilde\btheta^L\subj[j])\sqrt{n}$ through both sides of KKT equality, we may examine the probability of non-sparsity in $\model_0^c$ by comparing bounded variables with unbounded ones. Let $\bm{e}_j$ be a conformable column vector of 0 with 1 in the $j^{th}$ position, $U_{nj}=\bm{e}_j^\top\{\bar{\bG}_n - \sqrt{n}\bH_{n}(\tilde\btheta^L - \btheta_0)\}$, $v_{nj}=n^{-1/2} \lambda_n \hat\bw\subj[j]$, and $\bar{v}_n=\min_{j \in \model_0^c} v_{nj}$. Then,
\geq \bar{v}_n \right),
\Vert \bar{\bG}_n \Vert_{\infty} + \Vert \sqrt{n}\bH_n \Vert_\infty \Vert \tilde\btheta^L - \btheta_0 \Vert_\infty$, and the final inequality follows from the definition of $\bar{v}_n$.
\Vert \bar{\bG}_n \Vert_{\infty} $ and $ \Vert \sqrt{n} \bH_n \Vert_\infty$ are both $O_p(1)$. Then $U_n = O_p(1)$, and since $ \bar{v}_{n} \CiP \infty$, we apply Proposition \ref{prop:bound-and-diverge} along with $\vert \model_0^c \vert < \infty$ to arrive at \eqref{eq:lemma-sel-cons}.
\Vert \bar{\bG}_n  \Vert_{\infty} + \Vert \sqrt{n} \bH_n \Vert_\infty \Vert \hat\btheta^L - \btheta_0 \Vert_\infty + o_p(1)$ to establish \eqref{eq:lemma-sel-cons-hat}. This completes the proof of selection consistency.
\bw\subj[j] \vert \btheta_{\model_0}\subj[j] \vert \\
\bw\subj[j] \vert \btheta_{\model_0}\subj[j] \vert, 
\bw\subj[j]$.
\top \bar{\bG}_{n\model_0} + o_p(1)$ by \eqref{eq:app-submodel-risk-quadratic}.
\top \bar{\bG}_{n\model_0} + o_p(1).
\CiD \mathbb{G}_{1\model_0} \sim N(\bzero, \bV_{1\model_0})$, where $\bV_{1 \model_0}$ is defined in \eqref{eq:submodel-meat-matrix}. The final condition (5) follows from the eigenvalue restrictions on $\bH_{0}(\model_0)$. Then we may conclude
    \Vert \sqrt{n}\{\hat\btheta(\model_0)^L - \btheta_{0\model_0}\} - [\bH_{0}(\model_0)]^{-1} \bar{\bG}_{n\model_0} \Vert_\infty = o_p(1).
\Vert \sqrt{n}(\tilde\btheta(\model_0)^L - \hat\btheta(\model_0)^L) \Vert_{\infty} = o_p(1)$, which in combination with the previous probability statement completes the proof of \eqref{eq:lemma-mu-approx}.
\bw\subj[j]$ for $j=2,\ldots,p+1$. As long as these variables either converge to zero or diverge in probability, we may construct $\Scal$ and $\Scal^c$ and satisfy the above conditions. For the remainder of this section, let $j=1,\ldots,p$ index the elements of $\bbeta$ rather than components of $\btheta$. This also extends to the weights $\hat\bw\subj[j]$, which range from 1 to $p$ rather than to $p+1$ as previously defined. Since we previously defined $\hat w_1 \equiv 0$, implying $1 \in \Scal_Z$ always holds, this re-indexing does not induce any fundamental changes to the proceeding arguments.
\bw\subj[j]$.
\bw\subj[j] = \vert \hat\balpha\subj[j] \hat\bbeta\subj[j] \vert^{-\kappa}$ for fixed $\kappa > 0$.
    \item \textbf{Case 1:} $\balpha_0\subj[j] \bbeta_0\subj[j] \neq 0$.
\bw\subj[j] \CiP \vert \balpha_0\subj[j] \bbeta_0\subj[j] \vert^{-\kappa} < \infty$, it follows that $v_{nj} = n^{-1/2} \lambda_n  O_p(1)$.
    \item \textbf{Case 2: } $\balpha_0\subj[j] = 0$, $\bbeta_0\subj[j] \neq 0$. \[
\bw\subj[j] = \Big[\big\vert \sqrt{n} \hat\balpha\subj[j] \big\{\bbeta_0\subj[j] + o_p(1) \big\} \big\vert \Big]^{-\kappa} = O_p(1)
    \item \textbf{Case 3: } $\balpha_0\subj[j] \neq 0$, $\bbeta_0\subj[j] = 0$. By the symmetry of the weights, a similar argument as above implies $v_{nj} = n^{(\kappa-1)/2} \lambda_n O_p(1)$.
    \item \textbf{Case 4: } $\balpha_0\subj[j] = 0$, $\bbeta_0\subj[j] = 0$. By the above arguments, we have
\balpha\subj[j] \hat\bbeta\subj[j] \vert \Big\}^\kappa = n^{(2\kappa - 1)/2}\lambda_n O_p(1).
\bw\subj[j] = \vert \hat\bbeta\subj[j] \vert^{-\kappa}$ for fixed $\kappa > 0$.
    \item \textbf{Cases 1\&2:} $\bbeta_0\subj[j] \neq 0$. Since $\hat\bw\subj[j] \CiP \vert \bbeta_0\subj[j] \vert^{-\kappa} < \infty$, it follows that $v_{nj} = n^{-1/2} \lambda_n  O_p(1)$.
    \item \textbf{Cases 3\&4: } $\bbeta_0\subj[j] = 0$.
\bw\subj[j]= \big\vert \sqrt{n} \hat \beta_{j} \big\vert^{-\kappa} = O_p(1),
\balpha\subj[j] \hat\bbeta\subj[j] =& {\balpha_{0n}\subj[j] \bbeta_{0n}\subj[j]} + (\hat\balpha\subj[j] - \balpha_{0n}\subj[j]) (\hat\bbeta\subj[j] - \bbeta_{0n}\subj[j]) \\
\balpha\subj[j] - \balpha_{0n}\subj[j]) \bbeta_{0n}\subj[j] + (\hat\bbeta\subj[j] - \bbeta_{0n}\subj[j]) \balpha_{0n}\subj[j]\big\} \\
\bbeta\subj[j] =& \bbeta_0\subj[j] + \bh_{2n}\subj[j] + (\hat\bbeta\subj[j] - \bbeta_{0n}\subj[j]) \\
\bw\subj[j] = \vert \hat\balpha\subj[j] \hat\bbeta\subj[j] \vert^{-\kappa}$ for fixed $\kappa > 0$.
    \item \textbf{Case 1:} $\balpha_0\subj[j] \bbeta_0\subj[j] \neq 0$.
    \item \textbf{Case 2: } $\balpha_0\subj[j] = 0$, $\bbeta_0\subj[j] \neq 0$. Let $R^{(2)}_{nj} = \max\Big\{ r_{1nj}, n^{-1/2} \Big\}^{-1}$. Since $R^{(2)}_{nj} \hat\balpha\subj[j] \hat\bbeta\subj[j] = O_p(1)$, we have
    \item \textbf{Case 3: } $\balpha_0\subj[j] \neq 0$, $\bbeta_0\subj[j] = 0$. Similarly define $R^{(3)}_{nj} = \max\Big\{ r_{2nj}, n^{-1/2} \Big\}^{-1}$.
    \item \textbf{Case 4: } $\balpha_0\subj[j] = 0$, $\bbeta_0\subj[j] = 0$. Define $R^{(4)}_{nj} = \max\Big\{ r_{1nj}r_{2nj}, n^{-1/2} \Big\}^{-1}$. Then
    \item \textbf{Cases 1\&2:} $\bbeta_0\subj[j] \neq 0$. $v_{nj} = n^{-1/2}\lambda_n O_p(1)$.
    \item \textbf{Cases 3\&4:} $\bbeta_0\subj[j] = 0$. Use $R^{(3)}_{nj}$ as defined above and note $R^{(3)}_{nj}\hat\bbeta\subj[j] = O_p(1)$. Then $v_{nj}=(R^{(3)}_{nj})^{\kappa} n^{-1/2}\lambda_n O_p(1)$. When $\beta_{nj}=0$, this corresponds to $r_{2nj} = 0$ and hence $R^{(3)}_{nj}=\sqrt{n}$. Otherwise, if $c_{2j} = 1/2$, we also have $R^{(3)}_{nj}=\sqrt{n}$.
\balpha\subj[j] \{ D_i - \mu_{D}(\bX_i) \} \Big)^2$ for $j=1,\ldots,p$ in order to establish (ii) from (ii').
\Eop \left[  \{\bM_{\model} - \bmu_{M \model}(\bX)\}^\top~|~ \bX, D=d^* \right] \bbeta_{0 \model} \\
\top \end{pmatrix}
\top \end{pmatrix}
  \Eop\left[ \begin{pmatrix} \bzero^\top \\ \bm{\eta}(\model^*) \bm{\eta}(\model^{*c})^\top \end{pmatrix} \right]  \bigg\}\bbeta_0(\model^{*c}) \\
  \Eop\left[ \begin{pmatrix} \bzero^\top \\ \bm{\eta}(\model^*) \bm{\eta}(\model^{*c})^\top \end{pmatrix} \right] \bbeta_0(\model^{*c}).
\equiv \bm{1} \psi_M(\cdot)$ for $\bm{1} \in \R^p$ (i.e., the vector-valued function $\bm{\psi}_M$ took on the same value in each index), so that the model was characterized by the parameters $\balpha_0, \btheta_0$, and three functions.
\Eop\Bigg[ \bigg\{n_k^{-1} \sum_{i \in \Ik} G_i^{2\gamma} h_1(\bX_i; \crossfitDataC)^2 \bigg\}^{1/2} \bigg\{n_k^{-1} \sum_{i \in \Ik} h_2(\bX_i; \crossfitDataC)^2 \bigg\}^{1/2} ~\Bigg|~ \crossfitDataC \Bigg] \\
\Vert\sqrt{n}\left( \hat\btheta - \btheta_0 \right) - V_n^{-1} G_n \right\Vert_{\infty} \CiP 0 \]
\bmu \in [L_2(P_0)(\Xcal)]^{p+2}$ and any $\btheta_{\model} \in \R^{|\model|}$:
\Eop\left[ \{\bZ - \bmu_{Z0}(\bX)\} \{Y - \mu_{Y0}(\bX)\}\right]$, we may state the following proposition.
    \Eop \left[ \epsilon_{\model} \left\{ \bZ(\model) - \bmu_{Z0}(\bX)(\model) \right\} \right] = \bzero.
    \Eop\left[ \nabla L_{\model}(\btheta_{0\model}; \bmu, \bO) \right] = \bzero.
\Eop\left[ \nabla R_{n\model}(\btheta_{0\model}) \right] = \Eop\left[ \nabla L_{\model}(\btheta_{0\model}; \bmu, \bO) \right].
\Eop\left[ L_{\model}(\btheta_{0\model}; \bmu, \bO) \right].
\cdot  \Ltwo{\hat\mu_{Zj}- \mu_{Z0j}}  = o_p((n-n_k)^{-1/2})
\vert \frac{1}{n} \sum_{k=1}^K \sum_{i \in \Ik} \big\{ \Lbar_{\model}\big(\btheta_{\model}; \hat\bmu_k, \bO \big) - \Lbar_{\model}\big(\btheta_{\model}; \bmu_0, \bO \big) \big\} \bigg\vert = o_p(n^{-1/2}) \\
\Vert \frac{1}{n} \sum_{k=1}^K \sum_{i \in \Ik} \big\{ \nabla L_{\model}\big(\btheta_{\model}; \hat\bmu_k, \bO \big) - \nabla L_{\model}\big(\btheta_{\model}; \bmu_0, \bO \big) \big\} \bigg\Vert_\infty = o_p(n^{-1/2}) \\
\vert \frac{1}{n} \sum_{k=1}^K \sum_{i \in \Ik} \big\{ \Lbar_{\model}\big(\btheta_{0\model} + n^{-1/2}\bu_{\model}; \hat\bmu_k, \bO \big) - \Lbar_{\model}\big(\btheta_{0\model} + n^{-1/2}\bu_{\model}; \bmu_0, \bO \big) \big\} \bigg\vert = o_p(n^{-1}).
\leq \LtwoPzero{\mu_{Y0} - \hat\mu_{Y}} + \sum_{j=1}^{p+1} \theta_{0\model}(\{j\}) \LtwoPzero{\mu_{Z0j} - \hat\mu_{Zj}},
\vert \frac{1}{n} \sum_{k=1}^K \sum_{i \in \Ik} G_i \big\{ \Lbar_{\model}\big(\btheta_{\model}; \hat\bmu_k, \bO \big) - \Lbar_{\model}\big(\btheta_{\model}; \bmu_0, \bO \big) \big\} \bigg\vert = o_{p^*}(n^{-1/2}) \\
\Vert \frac{1}{n} \sum_{k=1}^K \sum_{i \in \Ik} G_i \big\{ \nabla L_{\model}\big(\btheta_{\model}; \hat\bmu_k, \bO \big) - \nabla L_{\model}\big(\btheta_{\model}; \bmu_0, \bO \big) \big\} \bigg\Vert_\infty = o_{p^*}(n^{-1/2}) \\
\vert \frac{1}{n} \sum_{k=1}^K \sum_{i \in \Ik} G_i \big\{ \Lbar_{\model}\big(\btheta_{0\model} + n^{-1/2}\bu_{\model}; \hat\bmu_k, \bO \big) - \Lbar_{\model}\big(\btheta_{0\model} + n^{-1/2}\bu_{\model}; \bmu_0, \bO \big) \big\} \bigg\vert = o_{p^*}(n^{-1}).
\top \bar{\bG}_{n \model} + \bu_{\model}^\top \bH_n(\model) \bu_{\model} \\
\top \bar{\bG}_{n \model} + \bu_{\model}^\top \bH_0(\model) \bu_{\model} + o_p(1),
\CiD \mathbb{G}_1 \sim N(\bzero, \bV_{1 \model} )$ for
\Eop\left[ \epsilon_{1\model}^2 \left\{ \bZ_1(\model) - \bmu_{Z01}(\model) \right\}^{\otimes 2} \right],
\Vert\sqrt{n}(\tilde\btheta_{\model} - \btheta_{0 \model}) - \bH_0^{-1} \bar{\bG}_{n \model}\Vert_{\infty} = o_p(1)$. We may infer from the Continuous Mapping Theorem and Slutsky's Theorem
\CiD \mathbb{G}_2 \sim N\left( \bm{0}, \left\{ \bH_0(\model) \right\}^{-1} \bV_{1 \model} \left\{ \bH_0(\model) \right\}^{-1} \right).
\top \bar{\bG}_{n \model} + \bu_{\model}^\top \bH_0(\model) \bu_{\model} + o_p(1)\\
\Vert\sqrt{n}(\hat\btheta_{\model} - \btheta_{0 \model}) - \bH_0^{-1} \bar{\bG}_{n \model}\Vert_{\infty} = o_p(1)$. Recalling the previous known-$\bmu_0$ case, this implies both (i) $\sqrt{n}(\hat\btheta_{\model} - \tilde\btheta_{\model}) = o_p(1)$ and (ii) $\sqrt{n}(\hat \btheta_{\model} - \btheta_{0 \model}) \CiD \mathbb{G}_2$.
\bw\subj[j] \CiP 0 \\
\bw\subj[j] \CiP \infty.
\CiD \mathbb{G}_{1\Scal} \sim N(\bzero, V_{1\Scal})$, where $V_{1\Scal} = \Eop\big(\epsilon_{\Scal}^2 (\bZ_{\Scal} - \bmu_{Z0\Scal})^{\otimes 2} \big)$. 
\Vert \sqrt{n}\big\{\hat\btheta^L(\Scal) - \btheta_{0 \Scal}\big\} - \sqrt{n}\big\{\tilde\btheta(\Scal) - \btheta_{0 \Scal}\big\}\big\Vert_\infty = o_p(1)\\
\CiD \mathbb{G}_{2\Scal} \sim N(\bm{0}, C_{0\Scal}^{-1} V_{1\Scal} C_{0\Scal}^{-1}),
\equiv \bar{\bG}_{n \model^F}$ as defined in the proof of \Cref{thm:stage1-can}, and $\bH_n=n^{-1}\sum_{i=1}^n (\bZ_i - \bmu_{Z0i})^{\otimes 2}$ .
\bg\subj[j] = n^{-1} \lambda_n \hat\bw\subj[j] sgn\{\tilde\btheta^L\subj[j]\}$. By multiplying $sgn(\tilde\btheta^L\subj[j])\sqrt{n}$ through both sides of KKT equality, we may examine the probability of non-sparsity in $\model_0^c$ by comparing bounded variables with unbounded ones. Let $\bm{e}_j$ be a conformable column vector of 0 with 1 in the $j^{th}$ position, $U_{nj}=\bm{e}_j^\top\{\bar{\bG}_n - \sqrt{n}\bH_{n}(\tilde\btheta^L - \btheta_0)\}$, $v_{nj}=n^{-1/2} \lambda_n \hat\bw\subj[j]$, and $\bar{v}_n=\min_{j \in \model_0^c} v_{nj}$. Then,
\geq \bar{v}_n \right),
\Vert \bar{\bG}_n \Vert_{\infty} + \Vert \sqrt{n}\bH_n \Vert_\infty \Vert \tilde\btheta^L - \btheta_0 \Vert_\infty$, and the final inequality follows from the definition of $\bar{v}_n$.
\Vert \bar{\bG}_n \Vert_{\infty} $ and $ \Vert \sqrt{n} \bH_n \Vert_\infty$ are both $O_p(1)$. Then $U_n = O_p(1)$, and since $ \bar{v}_{n} \CiP \infty$, we apply Proposition \ref{prop:bound-and-diverge} along with $\vert \model_0^c \vert < \infty$ to arrive at \eqref{eq:lemma-sel-cons}.
\Vert \bar{\bG}_n  \Vert_{\infty} + \Vert \sqrt{n} \bH_n \Vert_\infty \Vert \hat\btheta^L - \btheta_0 \Vert_\infty + o_p(1)$ to establish \eqref{eq:lemma-sel-cons-hat}. This completes the proof of selection consistency.
\bw\subj[j] \vert \btheta_{\model_0}\subj[j] \vert \\
\bw\subj[j] \vert \btheta_{\model_0}\subj[j] \vert, 
\bw\subj[j]$.
\top \bar{\bG}_{n\model_0} + o_p(1)$ by \eqref{eq:app-submodel-risk-quadratic}.
\top \bar{\bG}_{n\model_0} + o_p(1).
\CiD \mathbb{G}_{1\model_0} \sim N(\bzero, \bV_{1\model_0})$, where $\bV_{1 \model_0}$ is defined in \eqref{eq:submodel-meat-matrix}. The final condition (5) follows from the eigenvalue restrictions on $\bH_{0}(\model_0)$. Then we may conclude
    \Vert \sqrt{n}\{\hat\btheta(\model_0)^L - \btheta_{0\model_0}\} - [\bH_{0}(\model_0)]^{-1} \bar{\bG}_{n\model_0} \Vert_\infty = o_p(1).
\Vert \sqrt{n}(\tilde\btheta(\model_0)^L - \hat\btheta(\model_0)^L) \Vert_{\infty} = o_p(1)$, which in combination with the previous probability statement completes the proof of \eqref{eq:lemma-mu-approx}.
\bw\subj[j]$ for $j=2,\ldots,p+1$. As long as these variables either converge to zero or diverge in probability, we may construct $\Scal$ and $\Scal^c$ and satisfy the above conditions. For the remainder of this section, let $j=1,\ldots,p$ index the elements of $\bbeta$ rather than components of $\btheta$. This also extends to the weights $\hat\bw\subj[j]$, which range from 1 to $p$ rather than to $p+1$ as previously defined. Since we previously defined $\hat w_1 \equiv 0$, implying $1 \in \Scal_Z$ always holds, this re-indexing does not induce any fundamental changes to the proceeding arguments.
\bw\subj[j]$.
\bw\subj[j] = \vert \hat\balpha\subj[j] \hat\bbeta\subj[j] \vert^{-\kappa}$ for fixed $\kappa > 0$.
    \item \textbf{Case 1:} $\balpha_0\subj[j] \bbeta_0\subj[j] \neq 0$.
\bw\subj[j] \CiP \vert \balpha_0\subj[j] \bbeta_0\subj[j] \vert^{-\kappa} < \infty$, it follows that $v_{nj} = n^{-1/2} \lambda_n  O_p(1)$.
    \item \textbf{Case 2: } $\balpha_0\subj[j] = 0$, $\bbeta_0\subj[j] \neq 0$. \[
\bw\subj[j] = \Big[\big\vert \sqrt{n} \hat\balpha\subj[j] \big\{\bbeta_0\subj[j] + o_p(1) \big\} \big\vert \Big]^{-\kappa} = O_p(1)
    \item \textbf{Case 3: } $\balpha_0\subj[j] \neq 0$, $\bbeta_0\subj[j] = 0$. By the symmetry of the weights, a similar argument as above implies $v_{nj} = n^{(\kappa-1)/2} \lambda_n O_p(1)$.
    \item \textbf{Case 4: } $\balpha_0\subj[j] = 0$, $\bbeta_0\subj[j] = 0$. By the above arguments, we have
\balpha\subj[j] \hat\bbeta\subj[j] \vert \Big\}^\kappa = n^{(2\kappa - 1)/2}\lambda_n O_p(1).
\bw\subj[j] = \vert \hat\bbeta\subj[j] \vert^{-\kappa}$ for fixed $\kappa > 0$.
    \item \textbf{Cases 1\&2:} $\bbeta_0\subj[j] \neq 0$. Since $\hat\bw\subj[j] \CiP \vert \bbeta_0\subj[j] \vert^{-\kappa} < \infty$, it follows that $v_{nj} = n^{-1/2} \lambda_n  O_p(1)$.
    \item \textbf{Cases 3\&4: } $\bbeta_0\subj[j] = 0$.
\bw\subj[j]= \big\vert \sqrt{n} \hat \beta_{j} \big\vert^{-\kappa} = O_p(1),
\balpha\subj[j] \hat\bbeta\subj[j] =& {\balpha_{0n}\subj[j] \bbeta_{0n}\subj[j]} + (\hat\balpha\subj[j] - \balpha_{0n}\subj[j]) (\hat\bbeta\subj[j] - \bbeta_{0n}\subj[j]) \\
\balpha\subj[j] - \balpha_{0n}\subj[j]) \bbeta_{0n}\subj[j] + (\hat\bbeta\subj[j] - \bbeta_{0n}\subj[j]) \balpha_{0n}\subj[j]\big\} \\
\bbeta\subj[j] =& \bbeta_0\subj[j] + \bh_{2n}\subj[j] + (\hat\bbeta\subj[j] - \bbeta_{0n}\subj[j]) \\
\bw\subj[j] = \vert \hat\balpha\subj[j] \hat\bbeta\subj[j] \vert^{-\kappa}$ for fixed $\kappa > 0$.
    \item \textbf{Case 1:} $\balpha_0\subj[j] \bbeta_0\subj[j] \neq 0$.
    \item \textbf{Case 2: } $\balpha_0\subj[j] = 0$, $\bbeta_0\subj[j] \neq 0$. Let $R^{(2)}_{nj} = \max\Big\{ r_{1nj}, n^{-1/2} \Big\}^{-1}$. Since $R^{(2)}_{nj} \hat\balpha\subj[j] \hat\bbeta\subj[j] = O_p(1)$, we have
    \item \textbf{Case 3: } $\balpha_0\subj[j] \neq 0$, $\bbeta_0\subj[j] = 0$. Similarly define $R^{(3)}_{nj} = \max\Big\{ r_{2nj}, n^{-1/2} \Big\}^{-1}$.
    \item \textbf{Case 4: } $\balpha_0\subj[j] = 0$, $\bbeta_0\subj[j] = 0$. Define $R^{(4)}_{nj} = \max\Big\{ r_{1nj}r_{2nj}, n^{-1/2} \Big\}^{-1}$. Then
    \item \textbf{Cases 1\&2:} $\bbeta_0\subj[j] \neq 0$. $v_{nj} = n^{-1/2}\lambda_n O_p(1)$.
    \item \textbf{Cases 3\&4:} $\bbeta_0\subj[j] = 0$. Use $R^{(3)}_{nj}$ as defined above and note $R^{(3)}_{nj}\hat\bbeta\subj[j] = O_p(1)$. Then $v_{nj}=(R^{(3)}_{nj})^{\kappa} n^{-1/2}\lambda_n O_p(1)$. When $\beta_{nj}=0$, this corresponds to $r_{2nj} = 0$ and hence $R^{(3)}_{nj}=\sqrt{n}$. Otherwise, if $c_{2j} = 1/2$, we also have $R^{(3)}_{nj}=\sqrt{n}$.
\balpha\subj[j] \{ D_i - \mu_{D}(\bX_i) \} \Big)^2$ for $j=1,\ldots,p$ in order to establish (ii) from (ii').
\Eop \left[  \{\bM_{\model} - \bmu_{M \model}(\bX)\}^\top~|~ \bX, D=d^* \right] \bbeta_{0 \model} \\
\top \end{pmatrix}
\top \end{pmatrix}
  \Eop\left[ \begin{pmatrix} \bzero^\top \\ \bm{\eta}(\model^*) \bm{\eta}(\model^{*c})^\top \end{pmatrix} \right]  \bigg\}\bbeta_0(\model^{*c}) \\
  \Eop\left[ \begin{pmatrix} \bzero^\top \\ \bm{\eta}(\model^*) \bm{\eta}(\model^{*c})^\top \end{pmatrix} \right] \bbeta_0(\model^{*c}).
\setlist[enumerate]{label=\upshape(\roman*)}
\newlist{theoremlist}{enumerate}{1}
\setlist[theoremlist]{label=\upshape(\roman*), ref=\thetheorem (\roman*), before=\leavevmode, parsep=0\parsep, itemsep=0.5\itemsep}
\crefname{theoremlisti}{theorem}{theorems}
\newlist{asulist}{enumerate}{1}
\setlist[asulist]{label=\upshape(\roman*), ref=\theassumption (\roman*), before=\leavevmode, parsep=0\parsep, itemsep=1\itemsep}
\crefname{asulisti}{assumption}{assumptions}
\def\mathspaceand{\hspace*{5pt}\text{and}\hspace*{5pt}}
\newcommand{\subj}[1][j]{ (\{#1\}) }
\def\mytitle{Causal Mediation Analysis: Selection with Asymptotically Valid Inference}
\title{ \mytitle}
\author
\def\work{paper }
\def\correctThmDisplayMath{\vspace{-1.9em}}
\begin{document}

\maketitle

\begin{abstract}
Researchers are often interested in learning not only the effect of treatments on outcomes, but also the pathways through which these effects operate. A mediator is a variable that is affected by treatment and subsequently affects outcome. Existing methods for penalized mediation analyses may lead to ignoring important mediators and either assume that finite-dimensional linear models are sufficient to remove confounding bias, or perform no confounding control at all. In practice, these assumptions may not hold. We propose a method that considers the confounding functions as nuisance parameters to be estimated using data-adaptive methods. We then use a novel regularization method applied to this objective function to identify a set of important mediators. We derive the asymptotic properties of our estimator and establish the oracle property under certain assumptions. Asymptotic results are also presented in a local setting which contrast the proposal with the standard adaptive lasso. 
We also propose a perturbation bootstrap technique to provide asymptotically valid post-selection inference for the mediated effects of interest. The performance of these methods will be discussed and demonstrated through simulation studies.

\end{abstract}
% \keywords{mediation analysis; variable selection; adaptive penalty; partial linear model}

% ----------------------------------------------------------------------
% 	Introduction
% ----------------------------------------------------------------------

\section{Introduction}

Researchers are often interested in learning about the mechanisms by which interventions or exposures affect outcomes. Mediation analysis seeks to characterize the effect of treatment through different pathways -- the effects directly resulting from the treatment, and those which are \textit{mediated} through other variables. 
A mediator is a variable that is affected by the treatment and subsequently affects the outcome. 
Due in part to contributions from the field of causal inference, much progress has been made in understanding sufficient assumptions for identifying various mediated effects, and statistical methods for estimating these quantities. These contributions have focused almost entirely on the case of a single mediator 
\citep{robinsIdentifiabilityExchangeabilityDirect1992,vanderlaanDirectEffectModels2008,vanderweeleConceptualIssuesConcerning2009,vanderweeleOddsRatiosMediation2010}, 
with some attention given to the case of multiple mediators. \cite{vanderweeleMediationAnalysisMultiple2014}
proposed a regression-based approach to assess the effect of an exposure on an outcome through several mediators considered jointly. 

More recently, methodology has been developed for the study of mediation in high-dimensional settings.
However, as illustrated in our simulation studies, standard regularization methods which do not account for the mediator-exposure relationship may lead to ignoring important mediators. Some examples of these regularization-based selection techniques are the adaptive lasso \citep{zouAdaptiveLassoIts2006a} and the minimax concave penalty \citep{zhangNearlyUnbiasedVariable2010}.
A proposal by \cite{zhangEstimatingTestingHighdimensional2016} used the minimax concave penalty along with pre-screening in high-dimensional epigenetic studies, basing the selection solely upon the outcome model.
More recent methods attempt to improve upon standard selection techniques by developing methods specific to the problem of mediator selection. 
\cite{zhaoPathwayLassoEstimate2016} proposed a method to perform sparse estimation of mediation pathway coefficients using a modified lasso penalty that directly penalized the estimated contribution to a causal estimand. \cite{schaidPenalizedModelsAnalysis2020} critiqued this approach for penalizing true mediators more heavily than non-mediators, and instead proposed a penalty to the full-data likelihood function based on the sparse group lasso. 
Importantly, neither of these methods explicitly handle measured confounders, which can invalidate causal inferences in mediation analysis. 
Further, the theoretical properties of these selection procedures have not yet, to the authors' knowledge, been established.
Instead of a selection-based methodology, \cite{songBayesianShrinkageEstimation2020} formulated a Bayesian prior that performs ridge-like shrinkage in high-dimensional settings.  
Other alternatives to penalized estimators include dimensionality reduction-based methods. These require transformations of the mediators and hence may negatively impact interpretability of the results \citep{chenHighdimensionalMultivariateMediation2017, huangHypothesisTestMediation2016}.   
All the methods discussed above assume that either no confounding is present, or that a postulated finite-dimensional linear model for confounders is sufficient to control for the residual confounding in the mediator and outcome models.

In the high dimensional settings of some of this previous work, techniques are broadly proposed from the perspective of combatting the curse of dimensionality and allowing the estimation of mediation effects. On the other hand, there may be scientific interest in identifying and interpreting parsimonious pathways through which causal effects are mediated, even when the dimensionality of the problem is not severe. From this perspective, methods for performing mediator selection are valued not based on an improvement in estimation, but based on an improvement in scientific understanding or interpretability. In this paper, we propose a framework for mediator selection focused on this scientific goal.

Our proposed framework for mediation analysis incorporates mediator selection and nonparametric confounding control. Specifically, we propose an estimating procedure where the functional association form of the confounder-treatment, confounder-mediators and confounder-outcome associations are considered infinite-dimensional nuisance parameters which are estimated using data-adaptive techniques. 
The parametrization we leverage de-couples the estimation of the focal parameters from confounding control. 
This allows a wide variety of techniques such as machine learning tools to be employed as confounding control, as long as the corresponding nuisance parameters are consistently estimated with certain rates.
We show that our estimators are asymptotically linear under certain assumptions. 
We demonstrate through simulation studies that the commonly-employed parametric linear confounding control suffers from severe confounding bias; we further show that such bias can be mitigated by the proposed method.
In addition, we develop regularization tailored to mediation, which can be used to identify the true mediators among a set of candidates. In \Cref{sec:theory} we establish the variable selection consistency of this approach---to our knowledge, the first result of this kind in the domain.
A perturbation bootstrap approach for inference after selection is presented which extends the results of \cite{minnierPerturbationMethodInference2011} to this setting.
A closer study of asymptotic behavior follows in \Cref{sec:local-asymp} using a local asymptotic framework that better approximates the situation where some parameters are close to zero. Our proposed estimator is shown to maintain mediator selection consistency while a standard regularization technique, the adaptive lasso, is shown to fail under the same scenario when the penalty term only involves coefficients in the outcome model.
We conduct extensive simulation studies to highlight the importance of the proposal and study the small sample performance of proposed estimators. In particular, we show: (1) failing to properly adjust for confounders can lead to a biased estimate of the mediated effects; (2) using standard regularization methods (e.g., adaptive lasso) that focus solely on the outcome model may lead to ignoring important mediators, and; (3) the perturbation bootstrap leads to confidence intervals close to the nominal level when combined with our novel estimator.

Our method is applied to a data set in education. The STAR study was a randomized controlled trial that sought to identify if small class sizes were responsible for measurably improving student outcomes in Grades K-3. A seminal analysis by \cite{kruegerExperimentalEstimatesEducation1999c} has shown that small class sizes lead to demonstrable improvements in student outcomes. 
A mediation analysis may answer the question: through which pathways does the causal effect of small class sizes travel to improve these test scores?
We report the results of our method on simultaneous selection of mediators and estimation of corresponding effects when applied to two mathematics standardized test outcomes in Grade 8: Computation, and Concepts \& Applications.

% ----------------------------------------------------------------------
% 	Notation and Model
% ----------------------------------------------------------------------
\section{Defining the Multiple-Mediator Problem}\label{sec:notation}

\subsection{Model Statement}
\label{sec:notation-model-statement}

In this section, we will formally define the mediation model and the notation to be used throughout the rest of the paper. Let $\Vert \bg \Vert_{q}$ represent the $\ell_q$ norm of the vector $\bg$ for $q=1,2,\ldots,\infty$  and let $\Vert C \Vert_{\infty}$ represent the maximal element of the matrix $C$. Let $\bg \cdot \bh$ represent the dot product of vectors $\bg$ and $\bh.$ For a distribution $P$ and set $\Xcal$, we say $g \in L_2(P)(\Xcal)$ if $g: \Xcal \mapsto \R$ and $\int_{\Xcal} g^2(x) dP(x) < \infty$. 
For any indexing set $\model$, we will use the conventions of \cite{kuchibhotlaValidPostselectionInference2020c} to refer to subsetted vectors and matrices: $\bH(\model)$ will refer to the principal submatrix of the square matrix $\bH$ constructed using the indices in $\model$, and $\bg(\model)$ will refer to a similarly constructed subvector of the vector $\bg \in \R^p$. As a special case, $\bH\subj[j]$ will refer to the $j^{th}$ diagonal element of $\bH$ and $\bg\subj[j]$ will similarly represent the $j^{th}$ component of $\bg$. 

Suppose we observe data $\bO_i\equiv(D_i, \bX_i, \bM_i, Y_i)$ for $i=1,\ldots,n$, which are i.i.d. according to some unknown distribution $P_0$. For $\bO\equiv(D, \bX, \bM, Y) \sim P_0,$ $D \in \lbrace 0, 1 \rbrace$ is a binary treatment, $\bX \in \R^{q}$ represents a set of confounding variables, $Y \in \R$ represents the outcome, and $\bM \in \R^p$ represents a set of post-treatment covariates which are candidates for mediating the treatment-outcome relationship. We assume the observed-data distribution $P_0$ satisfies:
\begin{equation}
	\begin{aligned}
    Y &= \bZ^\top \btheta_0 + \psi_Y(\bX) + \epsilon & \\
    \bM &= \balpha_{0} D + \bm{\psi}_{M}(\bX) + \bm{\eta}, &&
  \end{aligned}
  \label{eq:med-model}
\end{equation}
where we have partitioned $\bZ = (D, \bM^\top)^\top$ and $\btheta_0 = (\gamma_0, \bbeta_0^\top)^\top$.
In this display, $\bm{\eta}$ and $\epsilon$ satisfy 
$\Eop(\bm{\eta} \vert \bX, D)=\bzero$, $\Eop(\epsilon \vert \bX, D, \bM) = 0$,
$\Vert \Var(\bm{\eta} \vert \bX, D) \Vert_{\infty}<\infty$, and $\Var(\epsilon \vert \bX, D, \bM) < \infty$. These requirements are discussed further in \Cref{sec:ch2-notation-estimand}. 
The unknown functions $\psi_Y(\bX) \in L_2(P_0)(\Xcal)$ and $\bm{\psi}_{M}(\bX) \in [L_2(P_0)(\Xcal)]^p$ represent the confounding effects of $\bX$ on each of the respective variables. 
In the representation of the previous model, we allow that elements of the Euclidean parameters $\balpha_0$ and $\btheta_0$ may be exactly zero; this will motivate the development of our selection methodology.

The model \eqref{eq:med-model} expresses each of $Y$ and $\bM$ in terms of additive confounding functions of $\bX$. These functions can be difficult to adequately handle, as they are not estimable from the observed random variables without control of the linear terms. One approach is to re-express the system \eqref{eq:med-model} using Robinson's transformation \citep{robinsonRootnconsistentSemiparametricRegression1988}. This transformation introduces $p + 2$ infinite-dimensional nuisance parameters in exchange for removing all of the explicit dependence on the $\psi$ functions in \eqref{eq:med-model}. The introduced parameters are identified as conditional expectations which are directly estimable, separately from the Euclidean parameters $\balpha_0$ and $\btheta_0$. In this sense, the Robinson-transformed system is a re-parametrization of the original additive system.
In Section \ref{sec:conf-control}, estimation based on this re-parametrization will be discussed and in Section \ref{sec:theory} we will demonstrate that this framework allows $\sqrt{n}$-consistent, asymptotically normal estimation of the model parameters.

We apply Robinson's transformation by taking conditional expectations and then subtracting each from both sides of the equality in the respective lines of \eqref{eq:med-model}. This yields the system
\begin{equation}
  \begin{aligned}
    Y - \mu_{Y0}(\bX) =& \{\bZ - \bmu_{Z0}(\bX)\}^\top \btheta_{0} + \epsilon \\
    \bM - \bmu_{M0}(\bX) =& \{D - \mu_{D0}(\bX)\} \balpha_{0} + \bm{\eta},
  \end{aligned}
  \label{eq:one-stage-true}
\end{equation}
where $\mu_{D0}(\bX) = \Eop(D~|~\bX)$ is the treatment propensity, $\mu_{Y0}(\bX) = \Eop(Y~|~\bX)$, $\bmu_{Z0}(\bX)=\Eop(\bZ ~|~ \bX)$, and $\bmu_{M0}(\bX)=\Eop(\bM ~|~ \bX)$, and the boldface font represents that the conditional expectation is vector-valued. We collect all of these functions into $\bm{\mu}_0 = (\mu_{Y0}, \bmu_{Z0}^\top)^\top$, where $\bmu_{Z0} = (\mu_{D0}, \bmu_{M0}^\top)^\top$.

Our interest lies in the setting where it is known that there exists some submodel $\model^* \in \modelspace$ that corresponds to a ``good'' set of mediator variables, where $\modelspace$ is the set of all subsets of $\model^F := \{1,\ldots,p\}$. In this sense, $\modelspace$ may be considered a set of possible mediator models, with $\model^* \in \modelspace$ a particularly desirable one. Consequently, this lends $\model^{*c}$ the interpretation of a set of spurious or unimportant post-treatment variables for the mediation problem, where $\model^c \defined \model^F \setminus \model$ for any $\model \in \modelspace$.
In terms of \eqref{eq:med-model} and \eqref{eq:one-stage-true}, we define $\model^* \defined \{ j : \balpha_0\subj[j] \bbeta_0 \subj[j] \neq 0 \}$ and defer the motivation to \Cref{sec:ch2-notation-full-model}. 
For any $\model \in \modelspace$, we will use the previously-defined subsetting notation for quantities which are subsetted.
Submodel-defined objects will be subscripted, such as parameters $\btheta_{0\model}$ which may not always be defined as a subvector of the full $\btheta_0$ (i.e. $\btheta_{0\model^*}\neq\btheta_0(\model^*)$, in general).

Since we are only interested in selection on $\bM$, but the model is presented in terms of $\bZ=(D, \bM^\top)^\top$, we use the subscript $Z$ to map any index $\model$ for $\bM$ into an index for $\bZ$ which always includes the element $1$, or equivalently always includes the treatment $D$. Specifically, for any $\model \in \modelspace$, $\model_Z$ subsets $\bZ$ as $\bZ(\model_Z) = (D, \bM(\model)^\top)^\top$.
The two main cases which we encounter are $\model_Z^*$ and $\model_Z^F$, which are the indices in $\bZ$ born from $\model^*$ and $\model^F$, respectively. 

\subsection{Causal Estimand}
\label{sec:ch2-notation-estimand}

The assumption $\model^* \subseteq \model^F$ motivates identifying a lower-dimensional set of variables which mediate the effect of treatment on the outcome. The relevant mediation effects are defined for this subset $\model^*$ through the counterfactual or potential outcomes language of causal inference.
Let $Y_{d,\bmm_{\model^*}}$ be the potential outcome of $Y$ had the treatment been set to $D=d$ and the $\model^*$-related variables $\bM({\model^*})$ been set to $\bM({\model^*}) = \bmm_{\model^*}$.
We stress that these counterfactuals represent direct interventions only upon $D$ and $\bM(\model^*)$; the remaining $\bM(\model^{*c})$ are not controlled and may take whichever values naturally occur as a result of the intervention.
The subscript $\bmm_{\model^*}$ was chosen to reflect that these counterfactuals corresponding to $\model^*$ are fundamentally different than those related to $\model^F$, although one might also use the selection notation to reflect that only a subset of the full variables in $\bM$ are directly controlled.
Lastly, we will make use of the notation $\bM_{d}(\model^*)$ for the counterfactual value of $\bM({\model^*})$ had the treatment been set to $D=d$. 

Two causal quantities that we consider throughout this \work are the \textit{natural direct effect} and \textit{natural indirect effect}. 
Following \cite{vanderweeleMediationAnalysisMultiple2014}, these two quantities are respectively defined relative to $\model^*$ as
\begin{align}
  \label{eq:nde-causal}
  \nde_{\model^*} =& \E{ 
    Y_{1,\bM_{0}(\model^*) } - Y_{0,\bM_{0}(\model^*) }
  } \\
  \label{eq:nie-causal}
  \nie_{\model^*} =& \E{
    Y_{1,\bM_{1}(\model^*) }
    - Y_{1,\bM_{0}(\model^*) }
  }.
\end{align}
The $\nde_{\model^*}$ compares the expected difference in the counterfactual outcomes between treatments $D=1$ and $D=0$, while intervening to fix the value of the candidate mediators to $\bM(\model^*) = \bM_{0}(\model^*)$, or that which would have been observed under treatment $0$. The $\nie_{\model^*}$ captures the expected difference in the counterfactual outcomes obtained by holding the treatment constant at $D=1$ and intervening to set the candidate mediators to levels $\bM(\model^*) = \bM_{1}(\model^*)$ and $\bM(\model^*) = \bM_{0}(\model^*)$. The causal effects \eqref{eq:nde-causal} and \eqref{eq:nie-causal}, defined for the set of variables in $\model^*$, can be identified and estimated from the observed data when the following assumptions hold:
\begin{assumption}[Causal assumptions for mediation analysis]
  \begin{asulist}
    \item\label{asu:ch2-positivity} $\varepsilon < \mu_{D0}(\bX) < 1-\varepsilon$ for some $\varepsilon>0$
    \item\label{asu:1} $Y_{d,\bmm_{\model^*}} \independent D ~\vert~ \bX$~
  for any $d$ and $\bmm_{\model^*}$
    \item \label{asu:3}$ \bM_{d}(\model^*) \independent D ~\vert~ \bX $~ for any $d$
    \item\label{asu:2} $ Y_{d,\bmm_{\model^*}} \independent \bM(\model^*) ~\vert~ \lbrace D, \bX \rbrace $~ for any $d$ and $\bmm_{\model^*}$
    \item \label{asu:4} $ Y_{d,\bmm_{\model^*}} \independent \bM_{d^*}(\model^*) ~\vert~ \bX $~ for any $d,~d^*$ and $\bmm_{\model^*}$
  \end{asulist}
  \label{asu:1to4-mstar}
  \vspace{-0.5em}
\end{assumption}

\Cref{asu:ch2-positivity} is commonly called the positivity assumption in causal inference.
\Cref{asu:1,asu:2,asu:3} are necessary to ensure that the measured confounders $\bX$ are sufficiently large to control for outcome and mediator confounding. \Cref{asu:ch2-positivity,asu:1,asu:3} are met if the treatment is randomly assigned to the individual, as in a randomized controlled trial. Otherwise, these assumptions state that these variables behave as if randomized based on baseline covariates. \Cref{asu:2,asu:4} are crucial for the study of multiple mediators, even when treatment is randomized.
\Cref{asu:2} requires that no variables outside of $D$ and $\bX$ confound the relationship between the outcome and the variables in $\model^*$.
\Cref{asu:4} states that there are no mediators outside of $\model^*$ which are related to any variables in $\model^*$. 

The model \eqref{eq:med-model} is not generally sufficient to study the $\nde_{\model^*}$ and $\nie_{\model^*}$ under \Cref{asu:1,asu:2,asu:3,asu:4}. 
Consequently, we focus on submodels which satisfy:
\begin{equation}
  \begin{aligned}
    Y - \mu_{Y0}(\bX) =& \{\bZ(\model^*_Z) - \bmu_{Z0}(\bX)(\model^*_Z)\}^\top \btheta_{0\model^*} + \epsilon_{\model^*} \\
    \bM({\model^*}) - \bmu_{M0}(\bX)(\model^*) =& \{D - \mu_{D0}(\bX)\} \balpha_{0}(\model^*) + \bm{\eta}(\model^*),
  \end{aligned}
  \label{eq:med-submodel}
\end{equation}
where $\model_Z^*$, as defined in \Cref{sec:notation-model-statement}, satisfies $\bZ(\model_Z^*) = (D, \bM(\model^*)^\top)^\top$. We use a similar model restriction on the error as in \eqref{eq:med-model}, captured in the following assumption.
\begin{assumption}
  \Cref{eq:med-submodel} holds with $\Eop\{ \epsilon_{\model^*} ~|~ \bX, D, \bM(\model^*)\}=0$ and $\Eop\{ \bm{\eta}(\model^*) ~|~ \bX, D\} = \bzero.$
  \label{asu:med-submodel-correct}
\end{assumption}

Under \Cref{asu:med-submodel-correct,asu:1to4-mstar},
\Cref{prop:ch2-natural-effect-submodel} 
establishes a representation for $\nde_{\model^*}$ and $\nie_{\model^*}$ in terms of previously-defined parameters:

\begin{proposition}
  Under \Cref{asu:med-submodel-correct,asu:1to4-mstar},
  \begin{align}
    \nde_{\model^*} &= \gamma_{0 \model^*} \label{eq:nde} \\
    \nie_{\model^*} &= \balpha_{0}(\model^*) \cdot \bbeta_{0 \model^*} .\label{eq:nie}
  \end{align}
  \label{prop:ch2-natural-effect-submodel}
  \vspace{-2em}
\end{proposition}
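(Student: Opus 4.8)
The plan is to reduce both target effects to a single object, the conditional mean of the cross-world counterfactual $\Eop\big(Y_{d,\bM_{d^*}(\model^*)}\mid\bX\big)$ for $d,d^*\in\{0,1\}$, and then recover \eqref{eq:nde}--\eqref{eq:nie} by differencing. Indeed, $\nde_{\model^*}$ fixes $d^*=0$ and contrasts $d=1$ against $d=0$, whereas $\nie_{\model^*}$ fixes $d=1$ and contrasts $d^*=1$ against $d^*=0$; if the cross-world mean turns out to be additive and linear in $(d,d^*)$, each contrast will isolate exactly one structural parameter, and since the resulting expressions will be free of $\bX$, the outer expectation over $\bX$ in \eqref{eq:nde-causal}--\eqref{eq:nie-causal} becomes trivial.

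The crux is the identification step: writing the cross-world mean in terms of the observed-data distribution through Pearl's mediation (g-)formula,
\[
  \Eop\big(Y_{d,\bM_{d^*}(\model^*)}\mid\bX\big) = \int \Eop\big(Y\mid\bX, D=d, \bM(\model^*)=\bmm\big)\,dP_0\big(\bmm\mid\bX, D=d^*\big).
\]
First I would justify this representation by invoking \Cref{asu:1to4-mstar} together with consistency: positivity (\Cref{asu:ch2-positivity}) keeps the inner and outer conditional laws well-defined; \Cref{asu:1} and \Cref{asu:3} give treatment ignorability for the outcome and the mediators; \Cref{asu:2} gives mediator ignorability given $(D,\bX)$; and the cross-world condition \Cref{asu:4} is what allows the two ``worlds'' indexed by $d$ and $d^*$ to be combined inside one integral. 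I expect this to be the main obstacle, both because it is where all of the causal content resides and because care is needed to check that controlling only $\bM(\model^*)$ while letting $\bM(\model^{*c})$ vary freely remains compatible with the independences exactly as stated for the submodel.

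With identification in hand, the remainder is algebra driven by \Cref{asu:med-submodel-correct}. The outcome line of \eqref{eq:med-submodel} with $\Eop\{\epsilon_{\model^*}\mid\bX,D,\bM(\model^*)\}=0$ yields
\[
  \Eop\big(Y\mid\bX, D=d, \bM(\model^*)=\bmm\big) = \mu_{Y0}(\bX) + \{d-\mu_{D0}(\bX)\}\gamma_{0\model^*} + \{\bmm-\bmu_{M0}(\bX)(\model^*)\}^\top\bbeta_{0\model^*},
\]
while the mediator line with $\Eop\{\bm{\eta}(\model^*)\mid\bX,D\}=\bzero$ yields $\Eop\{\bM(\model^*)\mid\bX, D=d^*\} = \bmu_{M0}(\bX)(\model^*) + d^*\balpha_0(\model^*)$. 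Integrating the former against $P_0(\cdot\mid\bX,D=d^*)$ collapses the mediator-centered term to $d^*\balpha_0(\model^*)^\top\bbeta_{0\model^*}$, so that
\[
  \Eop\big(Y_{d,\bM_{d^*}(\model^*)}\mid\bX\big) = \mu_{Y0}(\bX) + \{d-\mu_{D0}(\bX)\}\gamma_{0\model^*} + d^*\balpha_0(\model^*)^\top\bbeta_{0\model^*}.
\]
Taking the $d^*=0$ contrast in $d$ cancels $\mu_{Y0}(\bX)$ and $\mu_{D0}(\bX)$ and leaves $\gamma_{0\model^*}$, while the $d=1$ contrast in $d^*$ leaves $\balpha_0(\model^*)^\top\bbeta_{0\model^*}$; both are constant in $\bX$, so the outer expectation delivers \eqref{eq:nde} and \eqref{eq:nie}. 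The substitution and differencing here are routine, confirming that the difficulty is concentrated entirely in the identification argument.
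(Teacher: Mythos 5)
Your proposal is correct and follows essentially the same route as the paper's own proof in the appendix: identify the cross-world mean $\Eop(Y_{d,\bM_{d^*}(\model^*)}\mid\bX)$ via Pearl's mediation formula under \Cref{asu:1to4-mstar}, substitute the linear submodel \eqref{eq:med-submodel} using the mean-zero error conditions of \Cref{asu:med-submodel-correct}, and recover \eqref{eq:nde}--\eqref{eq:nie} by differencing in $d$ and $d^*$. The only (harmless) discrepancy is that your expression for $\Eop\{\bM(\model^*)\mid\bX,D=d^*\}$ omits the $-\mu_{D0}(\bX)\balpha_0(\model^*)$ centering term implied by \eqref{eq:med-submodel}, but this term is constant in $d^*$ and cancels in the contrast, exactly as the analogous terms do in the paper's derivation.
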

\begin{proof}
  This proof is presented in \Cref{sec:app-nde-nie}.
\end{proof}

\begin{remark}
  \label{rem:ch2-interventional-effects}
  For the purposes of targeting the sparse causal estimands on the basis of observed data, the \Crefrange{asu:1}{asu:4} are appropriate in multiple-mediator settings \citep{vanderweeleMediationAnalysisMultiple2014}. As these authors
  reported, such assumptions may be violated in settings when true mediating variables are not included in the mediating set. This issue may potentially arise during selection, as the assumptions effectively rule out certain relationships between post-treatment variables in the sparse set $\model^*$ and those in $\model^{*c}$ (see \Cref{sec:app-med-assumps} for one such example). Partially to remedy this problem, \cite{vansteelandtInterventionalEffectsMediation2017} defined interventional effects of mediation. In the context of our model, our expressions \eqref{eq:nde} and \eqref{eq:nie} agree with the expressions of the interventional effects defined therein. Consequently, an interventional effect interpretation could be appropriate for our estimands, thereby obviating \Cref{asu:4}. However, \Cref{asu:2} still requires none of the variables indexed by $\model^{*c}$ to confound the mediator-outcome relationship. We will continue to focus this \work on the previously-defined natural effects with the understanding that an interventional interpretation may also be made under slightly more general assumptions. Additional remarks on our assumptions are presented in \Cref{sec:app-med-assumps}.
\end{remark}

\begin{remark}
  \label{rem:ch2-selection-on-alpha}
  It may be apparent from \Cref{eq:nie,eq:nde,eq:med-submodel} that selection of $\model^*$ has the potential to fundamentally re-define the parameters $\btheta_{0 \model^*}$ corresponding to the submodel; see \Cref{rem:natural-effect-submodel-equal} for further discussion.
  On the other hand, selection of $\model^*$ in the second line of \eqref{eq:med-submodel} only serves to create a subvector of the parameter $\balpha_{0}$. Hence, it suffices to estimate $\balpha_{0}$ once, under $\model^F$.
\end{remark}

\subsection{Our Contribution}
\label{sec:ch2-notation-full-model}

Although we are focused on the natural mediation effects defined by interventions on $\model^*$, this sparse mediation model is unknown. In general we must use the full-data $\model^F$ to identify $\model^*$. 
To this end, we may use similar logic as in the development of \Cref{prop:ch2-natural-effect-submodel} to derive a representation for the natural effects relative to $\model^F$: $\nde_{\model^F}$ and $\nie_{\model^F}$. These quantities are defined similarly to \eqref{eq:nde-causal} and \eqref{eq:nie-causal}, respectively, but with instances of $\model^*$ being replaced by $\model^F$ to reflect interventions on both $D$ and the full vector $\bM$.
As seen in the previous subsection, additional conditions are required to ensure that the $\nde_{\model^F}$ and $\nie_{\model^F}$ are representable based on the observed data; these are summarized in the following assumption:
\begin{assumption}
  \Cref{asu:1to4-mstar,asu:med-submodel-correct}
  % The \Cref{asu:ch2-positivity,asu:1,asu:2,asu:3,asu:4} 
  hold with respect to $\model^F$. Specifically, the errors satisfy $\Eop( \epsilon ~|~ \bX, D, \bM)=0$ and $\Eop( \bm{\eta} ~|~ \bX, D) = \bzero.$
  \label{asu:1to4-full}
  % \vspace{-0.5em}
\end{assumption}
\noindent Relevant to our development is the expression for $\nie_{\model^F}$:
\begin{equation}
  \label{eq:nie-full-model}
  \nie_{\model^F} = \balpha_{0} \cdot \bbeta_{0},
\end{equation}
From this representation, the previous definition of $\model^*$ given in \Cref{sec:notation-model-statement} is transparently motivated by $\model^F$. This set of post-treatment variables contains all of the averaged indirect effects flowing through $\model^F$. Heuristically, the set $\model^*$ is the important subset of $\model^F$, in the sense that $\nie_{\model^F}$ is fully determined by these variables. We therefore may view the variables in $\model^{*c}$ as unimportant to the fundamental mediation problem.

\begin{remark}
  \label{rem:natural-effect-submodel-equal}
  By the definition of $\model^*$, $\balpha_0 \cdot \bbeta_0 = \balpha_0(\model^*) \cdot \bbeta_0(\model^*)$. , Recalling the notation of $\bv\subj$ for the $j^{th}$ element of the vector $\bv$, $\nde_{\model^F}=\btheta_0\subj[1]$ and $\nde_{\model^*}=\btheta_{0\model^*}\subj[1]$. Therefore, whenever $\btheta_0(\model^*)=\btheta_{0\model^*}$ the causal effects from each model are numerically equivalent in the sense that $\nde_{\model^*}=\nde_{\model^F}$ and $\nie_{\model^*}=\nie_{\model^F}$. Standard calculations for linear models may establish situations when the subvector of the full parameter equals the submodel parameter. 
  As shown in \Cref{sec:natural-effect-submodel-equal}, the condition $\btheta_0(\model^*)=\btheta_{0\model^*}$ holds when either $\Eop\left[ \bm{\eta}\subj[\ell] \bm{\eta}\subj[j] \right] =0$ for all $j\in\model^*,~\ell\in\model^{*c}$ or $\bbeta_0(\model^{*c})=\bzero$. 
  The simulations in \Cref{sec:simulation} are designed to satisfy this property, so that the natural effects are numerically equivalent between both $\model^*$ and $\model^F$.
\end{remark}

Using the development so far, we are ready to formally state our goals. First, we wish to select the model $\model^*$ using the full data in $\bO_1,\ldots,\bO_n$. Second, we wish to estimate the natural effects corresponding to $\model^*$ by plugging parameter estimates into \eqref{eq:nde}-\eqref{eq:nie}. Finally, we wish to provide inference for these causal estimands after selection. 

The choice to focus on inference for $\model^*$ is motivated from a policy perspective. 
Inclusion of many spurious variables as mediators may hamper researchers' understanding of the causal structure of certain treatment-outcome relationships. For example, when studying the effect of small class sizes in grades K-3 upon 8th grade maths test scores, it may be of interest to discover which particular scores transmit the effect. Identifying a small set of scores could be interesting, as it could assist in determining which specific parts of the curriculum benefit from small class sizes, and how these benefits evolve over time. This can have policy implications, as more targeted efforts to improve the identified mediating variables may allow more efficient use of resources while maximizing student outcomes \citep{ertefaieDiscoveringTreatmentEffect2018}. We therefore do not view $\model^F$ as the mediation model of direct interest, as natural effects under this model would correspond to interventions on a potentially large set of variables. Instead, $\model^F$ is a universe of possible mediation models which might be selected, with $\model^*$ the most interesting and relevant choice in the sense of minimizing the number of mediators while containing all of the causal effect in $\nie_{\model^*}.$

\section{Methodology}
\subsection{Nonparametric Confounding Control}\label{sec:conf-control}

Parametric models have many desirable and well-studied properties when the class of models considered is sufficiently large to characterize the true model. However in many applications, there may be little prior evidence to suggest the parametric form of confounding effects on the outcome, $\psi_Y(\cdot)$, and those effects on $\bM$, $\bm{\psi}_{M}(\cdot)$. These effects represent a nuisance to the researcher, since the primary motivation in mediation analysis lies in characterizing the effect of treatment through direct and indirect pathways, yet control for them must be made in order to obtain unbiased estimates. 
In the previous section, we detailed a re-expression of Model \eqref{eq:med-model} in terms of conditional expectations. 
Whereas the original functions $\psi_Y(\bX)$ and $\bm{\psi}_M(\bX)$ were only estimable in the context of the other covariates in each model, the conditional expectations are directly estimable on the basis of the subscripted variables and $\bX$. 
In other words, the conditional expectations are the same between $\model^*$ and $\model^F$ and need only be estimated once.
The first step in our framework is then to provide acceptable estimates for $\bmu_0$.

To avoid over-fitting, we use $K$-fold cross-fitting to estimate the nuisance parameters \citep{klaassenConsistentEstimationInfluence1987}.
We first randomly split the original sample into 
disjoint (hence independent) samples $(\crossfitData)_{k=1}^K$ such that the size of each sample is roughly $\left \lfloor n/K \right \rfloor$ and $\Ik \subset \{1,\ldots,n\}, k = 1,\ldots,K$ partition the indices $\{1,\ldots,n\}.$  
Analogously, define $\Ikc$ as the set of sample indices that are not included in $\Ik;$ that is, $\Ikc = \{1,2,\cdots,n\} \setminus \Ik, k=1,\ldots,K$. Then, for each $k=1,2,\cdots,K$, estimate the nuisance functions  $\mu_{Y0}(\cdot),$ $ \mu_{D0}(\cdot),$ and $ \bmu_{M0}(\cdot)$ using the data in $\crossfitDataC;$
we respectively denote these estimators 
$\hat \mu_{Y}(\cdot;\crossfitDataC),$ $\hat \mu_{D}(\cdot;\crossfitDataC),$ and $\hat \bmu_{M}(\cdot;\crossfitDataC),$ $k=1,\ldots K.$. As a convenience, we denote these parameters evaluated at $\bX_i$ as $\hat\mu_{Yi}$, $\hat\mu_{Di}$, and $\hat\bmu_{Mi}$, respectively, with $\hat\bmu_{Zi} = (\hat\mu_{Di},~ \hat\bmu_{Mi}^\top)^\top$. 
Similar notation will be used for the true function values, e.g. $\bmu_{Z0i}=(\mu_{D0i}, \bmu_{M0i}^\top)^\top$ with $\mu_{D0i}=\mu_{D0}(\bX_i)$, et cetera.

Instead of relying on any individual machine-learning or parametric method to control confounding, we combine multiple techniques into an ensemble-based predictor. The ``stacked ensemble'' approach was introduced by \cite{breimanStackedRegressions1996} and later generalized under the name SuperLearner \citep{vanderlaanSuperLearner2007}. This method defines a library of candidate estimators and uses cross-validation to estimate an optimal weighted combination of these predictors, which then defines the SuperLearner. Due to previous asymptotic characterizations of cross-validation procedures, this procedure adapts to the optimal model at a fast rate. In addition, the size of the grid can grow at a polynomial rate compared with the sample size without detrimental effects to its oracle performance 
\citep{dudoitAsymptoticsCrossvalidatedRisk2005,vaartOracleInequalitiesMultifold2006}.  For these reasons, it is recommended that the library consist of a 
large and diverse set of regression modeling procedures (e.g., data-adaptive, semiparametric, and parametric). We also recommend including  different tuning parameter specifications of a data-adaptive method.

\subsection{Mediator Selection and Estimation}
\label{sec:proposed-loss}

Identification of $\model^*$ requires finding those elements of $\bM$ which are associated with both the treatment and the outcome. 
Standard regularized estimators that only consider one of two associations may perform poorly in finite samples. For example, penalizing the $\bbeta$ parameters in the outcome model may lead to ignoring mediators that are strongly associated with treatment but weakly associated with the outcome. Removing such variables may bias the estimated $\nie_{\model^*}$ due to the violation of one of the core assumptions in mediation analysis \citep{vanderweeleMediationAnalysisMultiple2014}.

To overcome these limitations, we propose an extension of the adaptive lasso \citep{zouAdaptiveLassoIts2006a} uniquely tailored to the mediation setting. 
Following the representation \eqref{eq:nie-full-model} for $\nie_{\model^F}$ and using the expression for the full partial linear model \eqref{eq:one-stage-true}, we define a class of objective functions
\begin{equation}
    \hat \Lcal_n(\btheta; \hat\bw, \lambda_n) := \frac{1}{n} \sum_{i=1}^n{\Big\{Y_i - \hat{\mu}_{Yi} - \big( \bZ_i - {\hat \bmu}_{Zi} \big)^\top \btheta \Big\}^2} + \frac{\lambda_n}{n} \sum_{j=1}^{p+1}{\hat\bw\subj[j] \big\vert \btheta\subj[j] \big\vert},
    \label{eq:est-med-lasso-obj}
\end{equation}
where $\hat\bw$ represents a vector of (possibly estimated) penalty weights. For example, the standard adaptive lasso weights would use $\hat\bw\subj[j] = |\breve\btheta\subj[j]|^{-\kappa}$ for a preliminary estimator $\breve\btheta$. 
The penalized estimator minimizes the display above, $\hat\btheta^L = \argmin_{\btheta} \hat\Lcal_n(\btheta; \hat\bw, \lambda_n)$,
which also contains the unpenalized estimator $\hat\btheta := \argmin_{\btheta} \hat\Lcal_n(\btheta; \hat\bw, 0)$ as a special case.
In light of \eqref{eq:med-model}, we partition $\hat\btheta^L=(\hat\gamma^L, \hat\bbeta^L)^\top$. Finally, we estimate $\balpha_0$ using least-squares:
\begin{equation}
  \hat { \balpha } = \bigg\{ \sum_{i=1}^n (D_i - \hat\mu_{Di})^2 \bigg\}^{-1} \bigg\{ \sum_{i=1}^n (D_i - \hat\mu_{Di}) (\bM_i - \hat\bmu_{Mi}) \bigg\}.
  \label{eq:alpha}
\end{equation}
Equation \eqref{eq:alpha} is the result of fitting $p$ single-parameter least-squares models. As explained in \Cref{rem:ch2-selection-on-alpha}, selection merely subsets the full $\hat{\balpha}$ vector as used in \eqref{eq:nie}. 
In other words, the individual estimates $\hat\balpha\subj[j]$ for $j=1,\ldots,p$ do not change through selection.
After obtaining a selected model $\hat\model$ as well as estimators $\hat\balpha$ and $\hat\btheta^L$, one may plug in these quantities to create estimators $\ndehat=\hat\gamma^L$ and $\niehat=\hat\balpha \cdot \hat\bbeta^L$. Depending on the penalty used, these estimates may or may not target $\nde_{\model^*}$ and $\nie_{\model^*}$, respectively (c.f. \Cref{sec:theory}).

To identify $\model^*$, we should not penalize $\hat \Lcal_n$ for including a nonzero treatment coefficient; we accomplish this by setting $\hat\bw\subj[1] = 0$. The remaining weights $\hat\bw\subj[1+j]$ for $j=1,\ldots,p$ use pilot estimators $\breve{\balpha}$ and $\breve{\btheta}$, which we assume are $\sqrt{n}-$consistent for $\balpha_0$ and $\btheta_0$, respectively. As verified in Theorem \ref{thm:stage1-can}, a straightforward choice under the assumption $p<n$ would be to define $\breve{\balpha}$ as given in \eqref{eq:alpha} and use the unpenalized estimator $\breve\btheta \equiv \hat\btheta$. 
Adaptations could be made to accommodate the $p>n$ setting; see \Cref{sec:discussion} for discussion.

Our extension proceeds by choosing the adaptive penalty based on the contribution to the $\nie_{\model^F}$. Assuming we have pilot estimators $\breve{\balpha}$ and $\breve{\bbeta}$, let $\kappa > 0$. We propose the novel product weight (PRD):
\begin{equation}
  \hat\bw\subj[1+j] = \vert \breve{\balpha}\subj[j] \breve{\bbeta}\subj[j] \vert^{-\kappa}.
  \label{eq:prd-weight}
\end{equation}
This differs from the standard adaptive lasso (ADP) weight $ \vert \breve{\bbeta}\subj[j] \vert^{-\kappa}$ by including information on the relationship between mediator and treatment.
The penalty function resulting from \eqref{eq:prd-weight} directly penalizes candidate mediators that have little estimated contribution to the $\nie_{\model^F}$. This offers an improvement upon the ADP weight by incorporating information from the complete mediation model, protecting against removal of mediators with small $|\bbeta_0 \subj[j]|$ but large $|\balpha_0\subj[j]|$. Consequently, the PRD weight may include mediators that the ADP does not.

\subsection{Performing Inference Post-Selection}\label{sec:meth-pert-boot}

Our estimators rely on the performance of the adaptive lasso estimator for its simultaneous selection and nearly-unbiased estimation capabilities. Inferential techniques based on the adaptive lasso have been explored by others in the context of outcome-only models. \cite{zouAdaptiveLassoIts2006a} suggested appealing to the oracle normal distribution with a variance estimate informed by the Local Quadratic Approximation (LQA) technique of \cite{fanVariableSelectionNonconcave2001}. More recent work has explored the use of certain bootstrap techniques with the adaptive lasso as an improvement upon the oracle normal approximation. 
\cite{minnierPerturbationMethodInference2011} established first-order correctness of a perturbation bootstrap distribution, and claimed an improvement over the LQA approach supported by simulation studies. Later, \cite{dasPerturbationBootstrapAdaptive2019d} provided a second-order correction to the perturbation bootstrap distribution and claimed further improvement over the \cite{minnierPerturbationMethodInference2011} approach.

Despite investigations into the performance of these approaches in a single-outcome model, it is unclear how various methods perform in the system of equations defined by the mediation model. Traditionally, inference for the $\nie_{\model^*}$ using a plug-in estimator based on \eqref{eq:nie} is possible by appealing to the Delta Method. Confidence intervals and tests could be constructed by plugging in estimates of the asymptotic covariance matrix by appealing to an oracle property similar to LQA. Due to the simultaneous selection and estimation properties of our estimator, we expect such a method to perform undesirably in certain situations \citep{leebSparseEstimatorsOracle2008,leebCanOneEstimate2006}.

Instead, we propose a  generalization of the perturbation bootstrap approach of \cite{minnierPerturbationMethodInference2011}. Let $G_i$, $i=1,\ldots,n$ represent i.i.d. data-independent draws from a user-specified distribution satisfying $\Eop G_1 = 1$ and $\Eop G_1^2 < \infty$. These random variables represent observation-specific weights used to multiply the observation-specific contribution to the relevant loss functions. 
Specifically, the perturbation bootstrap estimates are defined by
\begin{equation}
  \begin{aligned}
    \hat\balpha^{b} =& \bigg\{ \sum_{i=1}^n G_i (D_i - \hat\mu_{Di})^2 \bigg\}^{-1} \bigg\{\sum_{i=1}^n G_i (D_i - \hat\mu_{Di}) (\bM_i - \hat\bmu_{Mi}) \bigg\}  \\
    \hat\btheta^{Lb} :=& \argmin_{\btheta} \sum_{i=1}^n G_i \big\{Y_i - \hat\mu_{Yi} - \btheta^\top (\bZ_i - \hat\bmu_{Zi}) \big\}^2 + \lambda_n^b \sum_{j=1}^{p+1} \hat \bw^b\subj[j] \big\vert \bbeta\subj[j] \big\vert
  \end{aligned}
  \label{eq:pert-boot-est}
\end{equation}
where the weights $\hat\bw^b$ make use of perturbed pilot estimators $(\breve{\balpha}^b, \breve{\bbeta}^b)$ and we once again partition $\hat\btheta^{Lb}=(\hat\gamma^{Lb}, \hat\bbeta^{Lb\top})^\top$. As in \cite{minnierPerturbationMethodInference2011}, we propose to use the distributions of $\sqrt{n}(\hat\gamma^{Lb} - \hat\gamma)$ and $\sqrt{n}(\hat\balpha^{b} \cdot \hat\bbeta^{Lb} - \hat\balpha \cdot \hat\bbeta^L)$ to approximate those of 
$\sqrt{n}(\hat\gamma^L - \gamma_0)$ and $\sqrt{n}(\hat\balpha \cdot \hat\bbeta^L - \balpha_0 \cdot \bbeta_0)$, respectively. This is the basis for forming confidence intervals for the $\nde_{\model^*}$ and $\nie_{\model^*}$.

\section{Theoretical results} \label{sec:theory}

\subsection{Oracle Selection and Normality}
\label{sec:theory-oracle}

Let $\mu_{M0j}(\cdot)$ and $\hat\mu_{Mj}(\cdot;\crossfitDataC)$ represent the $j^{th}$ elements of $\bmu_{M0}(\cdot)$ and $\hat\bmu_{M}(\cdot;\crossfitDataC)$, respectively. Assume the cross-fitting approach of \Cref{sec:conf-control} is used with $K$ separate folds. For $k=1,\ldots,K$ let $n_k = | \bI_k |$ represent the number of observations in each fold, with $n_k / n \rightarrow K^{-1}$ and $| \bI_k^c | = n - n_k$.
Recall our definition of $\| \bm{x} \|_q$ as the $\ell_q$ norm of a vector $\bm{x}$  for $q=1,2,\infty$, and that
$\crossfitDataC$ represents the $\bO_i$ outside of fold $k=1,\ldots,K$; then for a real-valued funtion $f(\bO; \crossfitDataC)$ of $\bO$ and $\crossfitDataC$, where $\bO \independent \crossfitDataC$, let $\Ltwo[P]{f}$ be as defined as the $\crossfitDataC$-dependent $L_2(P)$ norm for some measure $P$:
\begin{equation}
  \Ltwo[P]{f} = \Eop_{P} \left\{ 
    f(\bO; \crossfitDataC)^2
    ~|~ \crossfitDataC
  \right\}^{1/2},
  \label{eq:ltwo-crossfit}
\end{equation}
where $\Eop_{P}$ indicates integration over the distribution of $\bO \sim P$ drawn independently of $\crossfitDataC$. Consequently, the norm itself depends on the random variables $\crossfitDataC$ as well as the partition $(\Ik)_{k=1}^K$.
Let $\lambda_{\min}(\bV)$ and $\lambda_{\max}(\bV)$ represent the minimum and maximum eigenvalues, respectively, of the real-valued square matrix $\bV$. 

Define the matrix $\bH_{0} = \Eop\{\bZ - \bmu_{Z0}(\bX)\}^{\otimes 2}$. 
We make use of the submodel projection coefficients $\btheta_{0\model} = [H_0(\model_Z)]^{-1} \Eop \left[ \{\bZ(\model_Z) - \bmu_{Z0}(\bX)(\model_Z)\} \{ Y - \mu_{Y0}(\bX) \} \right]$, which aligns with previous notation under the special cases $\model=\model^*$ and $\model=\model^F$ for $\btheta_0 \equiv \btheta_{0\model^F}$. These projections induce error terms $\epsilon_{\model} = Y - \mu_{Y0}(\bX) - \{\bZ(\model_Z) - \bmu_{Z0}(\bX)(\model_Z)\}^\top \btheta_{0\model}$, which are defined for any $\model\in\modelspace$. Such quantities have been explored in the post-selection inference literature \citep[e.g., ][]{berkValidPostselectionInference2013c,kuchibhotlaValidPostselectionInference2020c}, although the $\btheta_{0\model}$ are merely a convenience in our formulation; they permit examination of asymptotic behavior for different submodels.
Finally, let $\bV_{1\model} = \Eop[\epsilon_{\model}^2 \{\bZ(\model_Z) - \bmu_{Z0}(\bX)(\model_Z)\}^{\otimes 2}]$
and $\bV_{2} = \Eop[\{D - \mu_{D0}(\bX)\}^{2} \bm{\eta}^{\otimes 2} ]$.
We make the following assumptions.
% \vspace{1em}

\begin{assumption}[Bounded Parameters]
  \label{asu:bounded-parameters}
  The vectors $\balpha_0,\btheta_0$ satisfy
  $\Vert\balpha_0\Vert_{\infty} < \infty $ and $ \Vert\btheta_0\Vert_{\infty} < \infty$.
  \vspace{-0.25em}
\end{assumption}

\begin{assumption}[Bounded Variance Matrices]
  \label{asu:ch2-bounded-variance}
  The matrices $\bH_{0}$, $\bV_{1\model}$, and $\bV_{2}$ each have eigenvalues bounded away from $0$ and $\infty$.
  \vspace{-0.25em}
\end{assumption}

\begin{assumption}[Cross product  rates]
  For $j=1,\ldots,p$ and $k=1,\ldots,K$: 
  \begin{align*}
    &\Ltwo{\mu_{D0} - \hat\mu_D} . 
    \Ltwo{\mu_{M0j} - \hat\mu_{Mj} } = o_p((n - n_k)^{-1/2}) \\
    &\Ltwo{\mu_{Y0} - \hat\mu_Y } 
  . \Ltwo{\mu_{M0j} - \hat\mu_{Mj} } = o_p((n - n_k)^{-1/2}) \\
  &\Ltwo{ \mu_{Y0} - \hat\mu_Y } 
  . \Ltwo{\mu_{D0} - \hat\mu_D }  = o_p((n - n_k)^{-1/2})
  \end{align*} 
  \correctThmDisplayMath
  \label{asu:cross-rate}
  
\end{assumption}
\begin{assumption}[Accuracy of treatment and mediator models] \hphantom{sdf}
  \newline
  For $j=1,\ldots,p$ and $k=1,\ldots,K$:
  \[
    \Ltwo{\mu_{D0} - \hat\mu_D} = \Ltwo{\mu_{M0j} - \hat\mu_{Mj} } = o_p((n - n_k)^{-1/4}) 
  \]
  \label{asu:rate-nonpar}
  \correctThmDisplayMath
\end{assumption}
\begin{assumption}[Convergence of the outcome model] For $k=1,\ldots,K$,
  \[
    \Ltwo{ \mu_{Y0} - \hat\mu_Y } = o_p(1).
  \]
  \label{asu:y-converge}
  \correctThmDisplayMath
\end{assumption}

Under these rate assumptions, we first establish the asymptotic behaviour of our proposed estimators under no penalization (i.e., $\lambda_n=0$). Let 
$\hat{\bm{\theta}}_\model\equiv \hat{\bm{\theta}}_\model(\lambda_n=0)$ and $\hat{\bm{\alpha}}_\model$ be the least-squares estimates resulting from \eqref{eq:est-med-lasso-obj} and \eqref{eq:alpha} for a particular submodel restriction $\model\in\modelspace$.

\begin{theorem}
Suppose \Cref{asu:med-submodel-correct,asu:1to4-full,asu:1to4-mstar,asu:bounded-parameters,asu:ch2-bounded-variance,asu:cross-rate,asu:rate-nonpar,asu:y-converge}
hold. 
Let $\model \in \modelspace$ be fixed.
Then,
\begin{align*}
    \sqrt{n}\Big( \hat\btheta_{\model} - \btheta_{0\model} \Big) 
    \rightarrow^d \normal\left( \bzero, \Jcal_{1\model} \right), 
    \mathspaceand~
    \sqrt{n} \Big\{ \hat\balpha(\model) - \balpha_{0}(\model) \Big\} 
    \rightarrow^d \normal\left( \bzero, \Jcal_{2\model} \right),
\end{align*}
where the matrices $\Jcal_{1\model}$ and $\Jcal_{2\model}$ are defined as
\begin{align*}
  \Jcal_{1\model} = [\bH_{0}(\model_Z)]^{-1} \bV_{1\model} [\bH_{0}(\model_Z)]^{-1} ,
  \mathspaceand~
  \Jcal_{2\model} = \Eop\left[  \{D - \mu_{D0}(\bX)\}^2  \right]^{-2} \bV_{2}(\model).
\end{align*}
\label{thm:stage1-can}
\correctThmDisplayMath
\end{theorem}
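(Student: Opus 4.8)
The plan is to apply the convex M-estimation result of \Cref{thm:arcones} to the cross-fitted least-squares risk $\hat R_{n\model}(\cdot)$, i.e.\ \eqref{eq:est-med-lasso-obj} with $\lambda_n=0$, of which $\hat\btheta_{\model}$ is the minimizer. Conditions (1)--(2) of that theorem are immediate, since $\hat R_{n\model}$ is convex and exactly minimized by $\hat\btheta_{\model}$; condition (5) follows from the eigenvalue bounds on $\bH_0$ in \Cref{asu:ch2-bounded-variance}, which are inherited by the principal submatrix $\bH_0(\model_Z)$ via the Cauchy interlacing argument already used in \Cref{prop:ch1-plm-submodel}. This reduces the task to verifying the local quadratic expansion (3) and the stochastic boundedness (4), after which \Cref{thm:arcones} delivers the linear representation $\sqrt{n}(\hat\btheta_{\model}-\btheta_{0\model}) = [\bH_0(\model_Z)]^{-1}\bar{\bG}_{n\model} + o_p(1)$.

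For condition (3) I would first handle the \emph{oracle} risk $R_{n\model}$, in which the nuisances $\bmu_0$ are known. Because the Robinson-transformed loss is exactly quadratic, \eqref{eq:sub-loss-quad} gives, after the substitution $\btheta_{\model}=\btheta_{0\model}+n^{-1/2}\bu_{\model}$,
\[
  n\{R_{n\model}(\btheta_{0\model}+n^{-1/2}\bu_{\model}) - R_{n\model}(\btheta_{0\model})\}
  = -2\bu_{\model}^\top\bar{\bG}_{n\model} + \bu_{\model}^\top\bH_n(\model_Z)\bu_{\model},
\]
with score $\bar{\bG}_{n\model}=n^{-1/2}\sum_i \epsilon_{\model i}\{\bZ_i-\bmu_{Z0i}\}(\model_Z)$ and empirical Gram matrix $\bH_n(\model_Z)\CiP\bH_0(\model_Z)$ by the law of large numbers. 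The bridge from $R_{n\model}$ to the cross-fitted $\hat R_{n\model}$ is exactly the content of \Cref{lem:uniform-conv-loss}: the third display in \eqref{eq:lem-unif-conv} shows the loss difference is $o_p(n^{-1})$ uniformly over $n^{-1/2}$-neighborhoods of $\btheta_{0\model}$, so multiplication by $n$ contributes only $o_p(1)$. This establishes (3) with $G_n=\bar{\bG}_{n\model}$ and $V_n=\bH_0(\model_Z)$. Condition (4) then follows from the central limit theorem: \Cref{prop:ch1-plm-submodel}, specifically \eqref{eq:ch1-submodel-error-uncorrelated}, guarantees that the summands are mean zero, so $\bar{\bG}_{n\model}\CiD N(\bzero,\bV_{1\model})$ with $\bV_{1\model}$ as in \eqref{eq:submodel-meat-matrix} and in particular $\bar{\bG}_{n\model}=O_p(1)$. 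Combining the linear representation with Slutsky's theorem and the continuous mapping theorem yields the first claim with sandwich variance $\Jcal_{1\model}=[\bH_0(\model_Z)]^{-1}\bV_{1\model}[\bH_0(\model_Z)]^{-1}$.

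The estimator $\hat\balpha$ is handled in parallel. Each coordinate solves a single-parameter quadratic least-squares problem in the Robinson-transformed mediator equation; stacking the $p$ coordinates produces a score $\bar{\bG}_{nM}=n^{-1/2}\sum_i\bm{\eta}_i(D_i-\mu_{D0i})$ and curvature $\Eop[(D-\mu_{D0})^2]$, and the cross-fitting error is removed by the same machinery, now applying \Cref{lem:mean-zero,lem:crossprod-terms} to the mediator-model residuals. A CLT for $\bar{\bG}_{nM}$, whose limiting covariance is $\bV_2$, together with Slutsky's theorem gives $\sqrt{n}(\hat\balpha-\balpha_0)\CiD N(\bzero,\Eop[(D-\mu_{D0})^2]^{-2}\bV_2)$; restricting to the coordinates in $\model$ produces $\Jcal_{2\model}$.

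I expect the main obstacle to be condition (3), specifically showing that substituting the cross-fitted $\hat\bmu_k$ for $\bmu_0$ does not perturb the asymptotic distribution. This is where the Neyman-orthogonality structure matters: the conditional mean-zero properties of $\epsilon_{\model}$ and $\bm{\eta}$ annihilate the first-order nuisance-estimation bias, while the product-rate \Cref{asu:cross-rate} controls the remaining second-order term. Since all of this is packaged into \Cref{lem:uniform-conv-loss}, given that lemma the remaining steps are a routine assembly of the Arcones conditions; without it, the bias control would be the genuinely delicate part of the argument.
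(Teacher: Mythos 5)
Your proposal is correct and follows essentially the same route as the paper's proof: conditions (1), (2), and (5) of \Cref{thm:arcones} from convexity and the eigenvalue bounds, the exact quadratic expansion of the oracle risk for condition (3), \Cref{lem:uniform-conv-loss} to absorb the cross-fitting error at rate $o_p(n^{-1})$ on $n^{-1/2}$-neighborhoods, the CLT for $\bar{\bG}_{n\model}$ via \Cref{prop:ch1-plm-submodel} for condition (4), and a parallel single-parameter argument for $\hat\balpha$. No gaps.
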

\begin{proof}
  This proof
  is presented in \Cref{sec:prf-thm-stage1-can}.   
\end{proof}

\Cref{asu:med-submodel-correct,asu:1to4-full,asu:1to4-mstar} are not necessary for the asymptotic distribution results of this theorem; instead, they are presented since they allow causal interpretations under this framework (\Cref{sec:ch2-notation-estimand,sec:ch2-notation-full-model}).
Theorem \ref{thm:stage1-can} importantly requires cross-fitted estimates of the nuisance parameters. This has previously been studied as a step to de-bias parameter estimates \citep{klaassenConsistentEstimationInfluence1987,zhengCrossValidatedTargetedMinimumLossBased2011,chernozhukovDoubleDebiasedMachine2018f}. Combined with 
\Cref{asu:1to4-full,asu:bounded-parameters,asu:ch2-bounded-variance,asu:cross-rate,asu:rate-nonpar,asu:y-converge}, this allows the bias due to nuisance parameter estimation to vanish faster than $n^{-1/2}$. When using correctly-specified parametric models, \Crefrange{asu:cross-rate}{asu:y-converge} hold by a Cauchy-Schwarz argument. When using more adaptive methods like machine learning techniques, the assumptions may still hold--see \cite{biauAnalysisRandomForests2012c, chenLargeSampleSieve2007, xiaohongchenImprovedRatesAsymptotic1999}.

Now we turn to the asymptotic behavior of $\hat\btheta^L$. Let
$\model^\dagger = \{ j : \balpha_0\subj[j] \neq 0 \vee \bbeta_0 \subj[j] \neq 0 \}$ and 
$\model^\ddagger = \{j : \bbeta_0 \subj[j] \neq 0 \}$, for $\vee$ the logical OR operator. Let $\model \in \modelspace$ represent some target set of mediators. We wish to establish oracle asymptotic results like the following for suitably defined $\model$:
\begin{equation}
  \label{eq:oracle-property}
  \begin{aligned}
    \Prob \big\{\hat\btheta^L\subj[j] \neq 0 ~\forall j \in \model^{c} \big\}\rightarrow 0, \mathspaceand
    \sqrt{n}\big\{\hat\btheta^L(\model) - \btheta_{0\model}\big\} \CiD \normal(\bzero, \Jcal_{1 \model}).
  \end{aligned}
\end{equation}

\begin{theorem}
  \label{thm:oracle-selection-normality}
  Let $\kappa > 0$, and
  adopt the setup of Theorem \ref{thm:stage1-can}. Suppose $\breve{\balpha}$, $\breve{\bbeta}$ are $\sqrt{n}-$consistent estimators of $\balpha_0$, $\bbeta_0$, respectively.
  \begin{theoremlist}
    \item\label{thm:oracle-selection-normality-1} Suppose the PRD weights \eqref{eq:prd-weight} are chosen. If $n^{(\kappa-1)/2} \lambda_n \rightarrow \infty$ and $n^{-1/2}\lambda_n \rightarrow 0$, then \eqref{eq:oracle-property} holds with $\model=\model^*$ .
    \item\label{thm:oracle-selection-normality-2} Suppose the PRD weights \eqref{eq:prd-weight} are chosen. If $n^{(2\kappa-1)/2} \lambda_n \rightarrow \infty$ and $n^{(\kappa-1)/2}\lambda_n \rightarrow 0$, then \eqref{eq:oracle-property} holds with $\model=\model^\dagger$.
    \item\label{thm:oracle-selection-normality-3} Suppose the ADP weights are chosen. If $n^{(\kappa-1)/2} \lambda_n \rightarrow \infty$ and $n^{-1/2}\lambda_n \rightarrow 0$, then \eqref{eq:oracle-property} holds with $\model=\model^\ddagger$.
  \end{theoremlist}
\end{theorem}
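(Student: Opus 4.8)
The plan is to reduce all three parts to \Cref{lem:rates-determine-behavior}: once the index set $\{1,\ldots,p+1\}$ is partitioned into $\Scal$ and $\Scal^c$ obeying the penalty-rate conditions \eqref{eq:scal-def}, that lemma immediately yields selection consistency for $\Scal^c$ together with the oracle normal law for $\hat\btheta^L(\Scal)$, which is exactly the content of \eqref{eq:oracle-property}. So the entire task is to compute the order of $v_{nj}=n^{-1/2}\lambda_n\hat\bw\subj[j]$ under each configuration of $(\balpha_0\subj[j],\bbeta_0\subj[j])$, and then verify that the rate hypotheses of each part send $v_{nj}\CiP0$ on the claimed target model and $v_{nj}\CiP\infty$ off it. Since $\hat w_1=0$, the index $1$ always lies in $\Scal$, so the treatment coefficient is never at risk, and it suffices to work over the mediator indices $j=1,\ldots,p$.

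For the PRD weights \eqref{eq:prd-weight} I would first use $\sqrt n$-consistency of $(\breve\balpha,\breve\bbeta)$ to sort the indices into three rate classes. When $\balpha_0\subj[j]\bbeta_0\subj[j]\neq0$, the weight converges to a finite nonzero limit and $v_{nj}=n^{-1/2}\lambda_n\,O_p(1)$; when exactly one of $\balpha_0\subj[j],\bbeta_0\subj[j]$ vanishes, a single root-$n$ factor inflates the weight by $n^{\kappa/2}$ so that $v_{nj}=n^{(\kappa-1)/2}\lambda_n\,O_p(1)$; when both vanish, two such factors give $v_{nj}=n^{(2\kappa-1)/2}\lambda_n\,O_p(1)$. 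For $\kappa>0$ the three exponents satisfy $-1/2<(\kappa-1)/2<(2\kappa-1)/2$, so the classes are strictly ordered. Part (i) sets $\Scal=\model^*$ (the first class): $n^{-1/2}\lambda_n\to0$ kills these penalties while $n^{(\kappa-1)/2}\lambda_n\to\infty$ blows up the second class and, by the ordering, the third as well. Part (ii) sets $\Scal=\model^\dagger$ (the first two classes): now $n^{(\kappa-1)/2}\lambda_n\to0$ kills both of them (the first a fortiori, since $n^{-1/2}\le n^{(\kappa-1)/2}$) while $n^{(2\kappa-1)/2}\lambda_n\to\infty$ isolates the all-zero class into $\Scal^c$. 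In each part \Cref{lem:rates-determine-behavior} then delivers \eqref{eq:oracle-property} with the stated $\model$.

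Part (iii) is the same argument with the ADP weights $\hat\bw\subj[j]=|\breve\bbeta\subj[j]|^{-\kappa}$, which depend on $\bbeta$ alone and so collapse the configurations into only two classes: $v_{nj}=n^{-1/2}\lambda_n\,O_p(1)$ whenever $\bbeta_0\subj[j]\neq0$, and $v_{nj}=n^{(\kappa-1)/2}\lambda_n\,O_p(1)$ whenever $\bbeta_0\subj[j]=0$, irrespective of $\balpha_0\subj[j]$. Taking $\Scal=\model^\ddagger=\{j:\bbeta_0\subj[j]\neq0\}$, the conditions $n^{-1/2}\lambda_n\to0$ and $n^{(\kappa-1)/2}\lambda_n\to\infty$ separate the two classes and the lemma again applies. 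Because the conceptual work has been absorbed into \Cref{lem:rates-determine-behavior}, the main obstacle here is purely the bookkeeping: extracting the correct power of $n$ from the degenerate weights via $\sqrt n$-consistency, and confirming that the two rate conditions in each part straddle the right class boundary. The monotone ordering of the exponents is what lets a single pair of conditions suffice, and it is precisely $\kappa>0$ that leaves a nonempty window of admissible $\lambda_n$ between consecutive classes, guaranteeing the hypotheses are jointly feasible.
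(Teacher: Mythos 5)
Your proposal is correct and follows essentially the same route as the paper: reduce each part to \Cref{lem:rates-determine-behavior}, classify indices by the order of $v_{nj}=n^{-1/2}\lambda_n\hat\bw\subj[j]$ (three rate classes for the PRD weights, two for the ADP weights, using $\sqrt{n}$-consistency of the pilots and the continuous, nondegenerate limits of $\sqrt{n}\breve\balpha\subj[j]$, $\sqrt{n}\breve\bbeta\subj[j]$ on the null coordinates), and check that the hypothesized rates for $\lambda_n$ straddle the relevant class boundary; your observation that the exponents $-1/2<(\kappa-1)/2<(2\kappa-1)/2$ are monotonically ordered, so a single pair of conditions handles all classes, is exactly the (implicit) logic in the paper's proof.
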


Theorem \ref{thm:oracle-selection-normality} is proved in \Cref{sec:app-ch2-pf-oracle-selection}; it
establishes the oracle asymptotic normal distribution for our proposed estimator. Notably, the targeted selection set depends on the weight function chosen, as well as the rate of $\lambda_n$. Using the product weights \eqref{eq:prd-weight}, two selection sets are possible: the true set of mediators $\model^*$ and the more conservative set $\model^\dagger$. By contrast, the ADP weights only target the outcome-only set $\model^\ddagger$. Consequently, the proposed weights are selection consistent for the true set of mediators, while the standard weights are only consistent when $\model^* = \model^{\ddagger}.$ The following corollary is a direct consequence of \Cref{thm:stage1-can,thm:oracle-selection-normality} along with standard Delta Method arguments.

\begin{corollary}
  Suppose $p < n$ with $p$ fixed and the setup of \Cref{thm:oracle-selection-normality-1} holds. Choose as pilot estimators $\breve{\balpha}=\hat\balpha$ and $\breve{\bbeta}=\hat\bbeta$.
  Then,
  \begin{align*}
    &\sqrt{n}\left( \hat\gamma^L - \nde_{\model^*} \right) \CiD \normal\left( 0, \Jcal_{1\model^*}\subj[1] \right) \mathspaceand \sqrt{n}\left( \hat\balpha^L \cdot \hat\bbeta^L - \nie_{\model^*} \right) \CiD \normal\left( 0, \Jcal_{NIE, \model^*} \right), \\
    &\text{where }~ \Jcal_{NIE, \model^*} \defined \left( 0,\balpha_{0}(\model^*)^\top \right) \Jcal_{1\model^*} \left( 0,\balpha_{0}(\model^*)^\top \right)^\top + \bbeta_{0\model^*}^\top \Jcal_{2\model^*} \bbeta_{0\model^*}.
  \end{align*}
\end{corollary}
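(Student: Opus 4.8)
The plan is to treat the two claims separately, obtaining the NDE result almost immediately from \Cref{thm:oracle-selection-normality-1} and the NIE result through a Delta Method argument whose main content is verifying that the two plug-in estimators are \emph{asymptotically independent}. Under the stated hypotheses the pilot estimators $\breve\balpha=\hat\balpha$ and $\breve\bbeta=\hat\bbeta$ are $\sqrt{n}$-consistent by \Cref{thm:stage1-can} (valid since $p<n$ with $p$ fixed), so \Cref{thm:oracle-selection-normality-1} applies and delivers both selection consistency for $\model^*$ and the oracle law $\sqrt{n}\{\hat\btheta^L(\model^*)-\btheta_{0\model^*}\} \CiD \normal(\bzero, \Jcal_{1\model^*})$. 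Since $\hat\bw\subj[1]=0$ the treatment coefficient is never penalized, so the first coordinate of this display is $\sqrt{n}(\hat\gamma^L-\gamma_{0\model^*})=\sqrt{n}(\hat\gamma^L-\nde_{\model^*})$ by \eqref{eq:nde}; reading off the $(1,1)$ entry $\Jcal_{1\model^*}\subj[1]$ gives the first claim.

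For the NIE, I would first use selection consistency to reduce to an oracle product. On the event $\{\hat\bbeta^L(\model^{*c})=\bzero\}$, which has probability tending to one by \eqref{eq:oracle-property}, we have $\hat\balpha\cdot\hat\bbeta^L = \hat\balpha(\model^*)\cdot\hat\bbeta^L(\model^*)$ (recalling $\hat\balpha^L=\hat\balpha$ by \Cref{rem:ch2-selection-on-alpha}), so it suffices to analyze the latter oracle quantity. Both factors admit the asymptotically linear expansions supplied by the proofs of \Cref{thm:stage1-can,thm:oracle-selection-normality}: writing $\bar\bG_{n\model^*}=n^{-1/2}\sum_i \epsilon_{i\model^*}(\bZ_i-\bmu_{Z0i})(\model_Z^*)$ and $\bar\bG_{nM}=n^{-1/2}\sum_i \bm{\eta}_i(D_i-\mu_{D0i})$, one has $\sqrt{n}\{\hat\btheta^L(\model^*)-\btheta_{0\model^*}\} = [\bH_0(\model_Z^*)]^{-1}\bar\bG_{n\model^*}+o_p(1)$ and $\sqrt{n}\{\hat\balpha(\model^*)-\balpha_0(\model^*)\}=\Eop[(D-\mu_{D0})^2]^{-1}\bar\bG_{nM}(\model^*)+o_p(1)$. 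A joint multivariate CLT applied to the stacked, mean-zero summands then yields joint asymptotic normality of $(\hat\balpha(\model^*),\hat\btheta^L(\model^*))$.

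The crux of the argument, and the step I expect to require the most care, is showing that the cross-covariance block of this joint limit vanishes; this is exactly what produces the \emph{sum} in $\Jcal_{NIE,\model^*}$ rather than a general quadratic form carrying cross-terms. The cross-covariance is proportional to $\Eop[\epsilon_{\model^*}\,(\bZ-\bmu_{Z0})(\model_Z^*)(D-\mu_{D0})\,\bm{\eta}(\model^*)^\top]$. Each of $(\bZ-\bmu_{Z0})(\model_Z^*)$, $(D-\mu_{D0})$, and $\bm{\eta}(\model^*)$ is a measurable function of $(\bX,D,\bM(\model^*))$, so by iterated expectations this expectation equals $\Eop[\Eop\{\epsilon_{\model^*}\mid \bX,D,\bM(\model^*)\}\,W]$ for the corresponding matrix $W$, and the inner conditional expectation is zero by \Cref{asu:med-submodel-correct}. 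Hence the limiting covariance is block-diagonal, $\mathrm{diag}(\Jcal_{2\model^*},\Jcal_{1\model^*})$.

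Finally I would apply the Delta Method to $\phi(\balpha,\btheta)=\balpha(\model^*)\cdot\bbeta(\model^*)$, whose gradient at $(\balpha_0(\model^*),\btheta_{0\model^*})$ is $\bbeta_{0\model^*}$ in the $\balpha$-block and $(0,\balpha_0(\model^*)^\top)^\top$ in the $\btheta$-block, the leading $0$ reflecting that $\phi$ does not depend on $\gamma$. Combining this gradient with the block-diagonal covariance gives the asymptotic variance $\bbeta_{0\model^*}^\top\Jcal_{2\model^*}\bbeta_{0\model^*}+(0,\balpha_0(\model^*)^\top)\Jcal_{1\model^*}(0,\balpha_0(\model^*)^\top)^\top=\Jcal_{NIE,\model^*}$, and since $\phi(\balpha_0(\model^*),\btheta_{0\model^*})=\nie_{\model^*}$ by \eqref{eq:nie}, the second claim follows.
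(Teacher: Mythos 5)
Your proposal is correct and follows exactly the route the paper intends: the paper offers no written proof beyond declaring the corollary ``a direct consequence of Theorems \ref{thm:stage1-can} and \ref{thm:oracle-selection-normality} along with standard Delta Method arguments,'' and your argument is precisely that chain, with the details filled in. The one step you rightly single out as needing care --- the vanishing cross-covariance between the influence functions of $\hat\balpha(\model^*)$ and $\hat\btheta^L(\model^*)$, which you obtain by conditioning on $(\bX, D, \bM(\model^*))$ and invoking $\Eop\{\epsilon_{\model^*} \mid \bX, D, \bM(\model^*)\}=0$ from \Cref{asu:med-submodel-correct} --- is exactly what justifies the additive form of $\Jcal_{NIE,\model^*}$ and is left implicit in the paper.
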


\subsection{Bootstrap Theory}
\label{sec:theory-boot}

We provide justification for our inferential approach based on the first-order correctness of the proposed perturbation bootstrap scheme. We show that the bootstrap distribution is consistent for the unknown sampling distribution of $\ndehat$ and $\niehat$. Thus, confidence intervals and hypothesis testing may be performed on the basis of the bootstrap methodology.

\begin{theorem}
  \label{thm:pert-boot}
  Adopt the assumptions of Theorem \ref{thm:stage1-can}. 
  Let $\hat\balpha^{b}$, $\hat\gamma^{Lb}$, and $\hat\bbeta^{Lb}$ be defined as in \eqref{eq:pert-boot-est}, and define
  $\ndehat^b = \hat\gamma^{Lb}$ and $\niehat^b = \hat\balpha^{b} \cdot \hat\bbeta^{Lb}$. Suppose $n^{(\kappa-1)/2} \lambda_n^b \rightarrow \infty$ and $n^{-1/2}\lambda_n^b \rightarrow 0$, the weights \eqref{eq:prd-weight} are used, and
  $\breve{\balpha}$, $\breve{\bbeta}$ are the unpenalized pilot estimators. Then,
  \begin{theoremlist}
    \item $\Prob \big\{\hat\btheta^{Lb}\subj[j] \neq 0 \text{ for any } j \in \model^{*c}\big\} \rightarrow 0$ and $\sqrt{n}\big\{\hat\btheta^{Lb}(\model^*) - \hat\btheta_{\model^*}\big\} \CiD \normal(\bzero, \Jcal_{1 \model^*})$,
    \item $\sqrt{n}\big\{\hat\balpha^{b}(\model^*) - \hat\balpha({\model^*})\big\} \CiD \normal(\bzero, \Jcal_{2 \model^*})$,
    \item $ \sqrt{n}\left( \niehat^b - \hat\balpha \cdot \hat\bbeta^L \right) \CiD \normal\left( 0, \Jcal_{NIE, \model^*} \right) $
      and 
    $\sqrt{n}\left( \ndehat^b - \hat\gamma^L \right) \CiD \normal\left( 0, \Jcal_{1\model^*}\subj[1] \right).$
    \end{theoremlist}
  
\end{theorem}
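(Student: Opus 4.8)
The plan is to reduce everything to the convex M-estimation machinery already assembled, exploiting the fact that \Cref{lem:mean-zero,lem:crossprod-terms} and \Cref{lem:perturbed-uniform-conv} were all stated to accommodate the random multipliers $G_i$. Throughout, every stochastic-order and distributional statement is to be read \emph{conditionally} on the observed data, holding with $P_0$-probability tending to one (equivalently, under the product measure $P_0\times P_G$, written $o_{p^*},\ O_{p^*}$). As in the fixed-weight analysis, I would first \emph{decouple the nuisance estimation error}: introduce oracle-nuisance bootstrap estimators $\tilde\btheta^{Lb}$ and $\tilde\balpha^b$, defined exactly as in \eqref{eq:pert-boot-est} but with the true $\bmu_0$ in place of the cross-fitted $\hat\bmu_k$. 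I would prove the three conclusions for these oracle-nuisance versions and then transfer them to the estimated-nuisance estimators.

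For the oracle-nuisance $\btheta$-estimator I would re-run the selection-consistency argument of \Cref{lem:rates-determine-behavior} --- now with multipliers $G_i$ in the KKT conditions and with the perturbed weights $\hat\bw^b$ --- to confine both $\tilde\btheta^{Lb}$ and the full-data submodel estimator $\tilde\btheta_{\model^*}$ to $\model^*$ with conditional probability tending to one. The $\Scal/\Scal^c$ dichotomy of \Cref{lem:rates-determine-behavior} holds verbatim, provided the perturbed pilots $(\breve\balpha^b,\breve\bbeta^b)$ are conditionally $\sqrt n$-consistent and $n^{(\kappa-1)/2}\lambda_n^b\to\infty$, $n^{-1/2}\lambda_n^b\to0$. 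Restricted to $\model^*$, \Cref{thm:arcones} applied conditionally yields asymptotically-linear representations for both the bootstrap and the full-data submodel estimators; subtracting them, the difference $\sqrt n\{\tilde\btheta^{Lb}(\model^*)-\tilde\btheta_{\model^*}\}$ is driven by the centered multiplier score $n^{-1/2}\sum_{i=1}^n (G_i-1)\,\epsilon_{i\model^*}\{\bZ_i-\bmu_{Z0i}\}(\model^*_Z)$ premultiplied by $[\bH_0(\model^*_Z)]^{-1}$, whose conditional law converges to $\normal(\bzero,\Var(G_1)\,\Jcal_{1\model^*})$ by a conditional multiplier CLT. This is statement (i) in the oracle-nuisance world; the identical unpenalized convex M-estimation argument applied to the multiplier-weighted $\balpha$-loss gives the oracle-nuisance analogue of (ii).

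The transfer step is handled by \Cref{lem:perturbed-uniform-conv}, which bounds the perturbed estimated-versus-oracle loss discrepancy by $o_{p^*}(n^{-1})$ uniformly over $n^{-1/2}$-neighborhoods of $\btheta_{0\model^*}$ --- precisely the tolerance needed so that condition (3) of \Cref{thm:arcones} and the resulting linear representation are preserved up to $o_{p^*}(1)$ after replacing $\bmu_0$ by $\hat\bmu_k$. Hence the oracle-nuisance results imply (i) and (ii) as stated. Finally, (iii) follows by assembling the joint conditional normality of $\big(\sqrt n\{\hat\btheta^{Lb}(\model^*)-\hat\btheta_{\model^*}\},\ \sqrt n\{\hat\balpha^b(\model^*)-\hat\balpha(\model^*)\}\big)$ and applying the Delta method to the bilinear map $(\balpha,\bbeta)\mapsto\balpha\cdot\bbeta$ at the consistent centers, which reproduces $\Jcal_{NIE,\model^*}$; the NDE statement is simply the first coordinate of the $\btheta$-result.

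The main obstacle is that every limit must be established in the \emph{conditional} sense rather than unconditionally. In particular, the multiplier CLT for $n^{-1/2}\sum_{i=1}^n (G_i-1)\epsilon_{i\model^*}\{\bZ_i-\bmu_{Z0i}\}(\model^*_Z)$ must be justified conditionally on the data; this requires checking a conditional Lindeberg condition, which holds because the summands have conditionally bounded second moments under \Cref{asu:ch2-bounded-variance}, so it is here that the bounded-variance hypotheses do the real work. A secondary but necessary check is the conditional $\sqrt n$-consistency of the perturbed pilots, needed to invoke \Cref{lem:rates-determine-behavior}; this itself follows from the unpenalized multiplier M-estimation argument. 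One should also note that exact matching of the limiting variances to $\Jcal_{1\model^*}$ and $\Jcal_{2\model^*}$ requires $\Var(G_1)=1$, as holds for the exponential$(1)$ weights used in the simulations.
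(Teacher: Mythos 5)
Your proposal is correct and follows essentially the same route as the paper's proof: introduce oracle-nuisance bootstrap estimators $\tilde\btheta^{Lb}$, $\tilde\balpha^{b}$ with $\bmu_0$ in place of $\hat\bmu_k$, establish (i$'$)--(ii$'$) for them via the multiplier M-estimation/KKT machinery (which the paper simply defers to \cite{minnierPerturbationMethodInference2011}), transfer to the estimated-nuisance versions using the uniform $o_{p^*}(n^{-1})$ bound of \Cref{lem:perturbed-uniform-conv} on $n^{-1/2}$-neighborhoods, and obtain (iii) by the Delta Method. Your closing remark that the limiting variance picks up a factor $\Var(G_1)$, so that exact agreement with $\Jcal_{1\model^*}$ and $\Jcal_{2\model^*}$ needs $\Var(G_1)=1$ (as with the exponential$(1)$ weights), is a fair and careful observation about the stated moment conditions.
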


\subsection{Selection and Estimation in a Local Asymptotic Framework}
\label{sec:local-asymp}

Next, we present a novel local asymptotic theorem that examines selection and estimation performance in a slightly different setting. This departs from the usual oracle selection analysis by allowing the true data-generating process to change with sample size. This framework was previously used in \cite{knightAsymptoticsLassotypeEstimators2000} to shed light on the differences between ridge regression and lasso regression when some coefficients are small, but nonzero. 
This local asymptotic framework allows us to examine how the proposed PRD weights perform relative to the ADP weights on the most difficult-to-estimate targets---those which are small but nonzero. This local view should allow the asymptotic theory to better approximate the finite-sample case \citep{knightAsymptoticsLassotypeEstimators2000}.

The main idea of this framework is the local data-generating process.
Suppose the parameters of mediation model \eqref{eq:med-model} vary with $n$, so that the true values of $\balpha$,$\bbeta$, and $\gamma$ are given by $\balpha_{0n} = \balpha_{0} + \bh_{1n}$, $\bbeta_{0n} = \bbeta_{0} + \bh_{2n}$, and $\gamma_{0n} = \gamma_0 + \bh_{3n}\subj[1]$, respectively, where the ``directions'' $\bh_{3n}\subj[1]$ and $\bh_{kn}$ for $\ell=1,2$ converge to zero in the sense that 
$|\bh_{\ell n}\subj[j]| \asymp r_{\ell nj}$ for $\ell=1,2,3$. We allow subsets $\tilde{\model}_{\ell} \subset \model^F$ for $\ell=1,2$ to be defined such that $|\bh_{\ell n}\subj[j]| = 0$ for $j \in \tilde{\model}_{\ell}^c$ and $r_{\ell nj} \rightarrow 0$ for $j \in \tilde{\model}_{\ell}$. The rate associated with the unpenalized coefficient is assumed to satisfy $r_{3n1} \rightarrow 0$.
Then for each $n$, we may define a true set of mediators $\model_n^*$ as those candidates for which $\balpha_{0n}\subj[j]\bbeta_{0n}\subj[j] \neq 0$. The true $\nde_{\model_n^*}$ and $\nie_{\model_n^*}$ values as listed in \eqref{eq:nde}-\eqref{eq:nie} may also be extended to use these local parameters. In other words, the true set of mediators, the effect sizes in each mediated pathway, and the mediation effect targets are defined relative to the local data-generating process. This allows the approximation of small-contribution mediators by setting, e.g. $\bbeta_0 \subj[j]=0$ for some $j = 1,\ldots,p$. Relative to the local data-generating process, the true $\bbeta_{0n}\subj[j] \asymp r_{2nj}$.

To simplify our theorem statement, we will make some further assumptions on the rates at which the directions of the local data-generating process converge to zero. These assumptions could be relaxed, although a straightforward comparison becomes complicated, especially when identifying appropriate rates for $\lambda_n$. We will assume that $r_{\ell nj} = n^{-c_{\ell j}}$ for $\ell=1,2$, $j \in \tilde{\model}_{\ell}$ and $\ell=3,~j=1$. For $\ell=1,2$, let $c_{\ell j}=0$ for $j \in \tilde{\model}_{\ell}^c$. We assume $c_{\ell j} \geq 0$ and $0 < c_{1j} + c_{2j} \leq 1/2$ for all $j \in \tilde{\model}_{1} \cup \tilde{\model}_{2} $, whereas $0 < c_{31} \leq 1/2$.
The oracle property for this local process is then generalized as 
\begin{equation}
  \label{eq:local-oracle-property}
  \begin{aligned}
    \Prob \big\{\hat\btheta^L\subj[j] \neq 0 \text{ for any } j \in \model^{c}\big\}\rightarrow 0 ~\mathspaceand~
    \sqrt{n}\big\{\hat\btheta^L(\model) - \btheta_{0n\model}\big\} \CiD \normal(\bzero, \Jcal_{1 \model}).
  \end{aligned}
\end{equation}

Let $\model_n^* = \{ j : \balpha_{0n}\subj[j] \bbeta_{0n}\subj[j] \neq 0 \}$ represent the set of true mediators relative to the local data-generating process. Let $\model_n^\dagger = \{ j: \balpha_{0n}\subj[j] \neq 0 \text{ or } \bbeta_{0n}\subj[j] \neq 0 \}$ 
and $\model_n^\ddagger = \{ j: \bbeta_0 \subj[j] \neq 0 \} \cup \{ j: \bbeta_0 \subj[j] = 0, c_{2j} < 1/2 \}$. Notice that $\model_n^\dagger \supseteq \model_n^*$, but it is not the case that 
$\model_n^\ddagger \supseteq \model_n^*$. 
That is, any procedure which selects $\model_n^\dagger$ conservatively selects the true set of mediators, while any procedure which selects $\model_n^\ddagger$ selects purely based on the outcome model and is sensitive to the rate at which $\bbeta_{0n}\subj[j]$ converges to zero. For relatively small coefficients satisfying $\bbeta_{0n}\subj[j] \asymp n^{-1/2}$, $j \notin \model_n^\ddagger$. We will verify below that the PRD weights may target $\model_n^\dagger$, while
the ADP weights target $\model_n^\ddagger$, meaning that true mediators may be removed by using the ADP weights.

\begin{theorem}
  \label{thm:local-oracle}
  Adopt the assumptions of Theorem \ref{thm:stage1-can}. Suppose Model \eqref{eq:med-model} holds with parameters described by the local data-generating process above.
  \begin{theoremlist}
    \item Suppose the PRD weights \eqref{eq:prd-weight} are chosen. If $n^{(\kappa - 1)/2} \lambda_n \rightarrow 0$ and $n^{(2\kappa - 1)/2} \lambda_n \rightarrow \infty$, then \eqref{eq:local-oracle-property} holds with $\model=\model_n^\dagger$.
    \item Suppose the ADP weights are chosen. If $n^{(\kappa-1-\delta)/2}\lambda_n \rightarrow 0$  and $n^{(\kappa - 1)/2}\lambda_n \rightarrow \infty$ for every $0<\delta<\kappa$, then \eqref{eq:local-oracle-property} holds with $\model=\model_n^\ddagger$.
  \end{theoremlist}
\end{theorem}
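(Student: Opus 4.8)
The plan is to reduce everything to \Cref{lem:rates-determine-behavior}, exactly as in the fixed-coefficient \Cref{thm:oracle-selection-normality}: once we exhibit a partition $\Scal \cup \Scal^c = \{1,\ldots,p+1\}$ satisfying the rate conditions \eqref{eq:scal-def}, the lemma delivers both the selection-consistency half and the oracle-normality half of \eqref{eq:local-oracle-property} for free. The only genuinely new work is to track each variable-specific penalty $v_{nj} = n^{-1/2}\lambda_n \hat\bw\subj$ under the local data-generating process and to check that the stated $\lambda_n$ conditions send $v_{nj}\CiP 0$ for $j$ in the claimed selected set and $v_{nj}\CiP\infty$ for $j$ in its complement. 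First I would confirm that \Cref{lem:rates-determine-behavior} applies verbatim here, as the local drift does not disturb its proof: it uses only convexity, the quadratic approximation of the Robinson-transformed risk, and a CLT for $\bar{\bG}_{n\model_0}$. The one change is that the oracle limit is now centered at the local projection $\btheta_{0n\model}$, while the population matrices $C_{0\Scal}, V_{1\Scal}$ converge to the fixed-coefficient limits defining $\Jcal_{1\model}$. The inputs for the rate computations are the two decompositions of the pilot estimates $\hat\balpha\subj\hat\bbeta\subj$ and $\hat\bbeta\subj$ already recorded in the preamble to this subsection.

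For the PRD weights (part (i)) I would split according to the limiting values $(\balpha_0\subj,\bbeta_0\subj)$, giving the four cases from the fixed-coefficient proof. With $R^{(2)}_{nj}=\max\{r_{1nj},n^{-1/2}\}^{-1}$, $R^{(3)}_{nj}=\max\{r_{2nj},n^{-1/2}\}^{-1}$ and $R^{(4)}_{nj}=\max\{r_{1nj}r_{2nj},n^{-1/2}\}^{-1}$, the decomposition gives $\hat\bw\subj\asymp (R^{(\cdot)}_{nj})^\kappa$, hence $v_{nj}\asymp (R^{(\cdot)}_{nj})^\kappa n^{-1/2}\lambda_n$ in Cases~2--4 and $v_{nj}\asymp n^{-1/2}\lambda_n$ in Case~1. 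Because $c_{1j},c_{2j}\ge 0$ and $c_{1j}+c_{2j}\le 1/2$ force each of these $R$'s to be no larger than $\sqrt n$, the condition $n^{(\kappa-1)/2}\lambda_n\to 0$ drives $v_{nj}\CiP 0$ for every variable with a nonzero local coefficient, i.e.\ every $j\in\model_n^\dagger$, so these land in $\Scal$. For a variable with both local coefficients identically zero (true noise, $j\notin\tilde\model_1\cup\tilde\model_2$) the product $\hat\balpha\subj\hat\bbeta\subj$ of two $O_p(n^{-1/2})$ estimators is $O_p(n^{-1})$, whence $\hat\bw\subj\asymp n^{\kappa}$ and $v_{nj}\asymp n^{(2\kappa-1)/2}\lambda_n$; the second hypothesis $n^{(2\kappa-1)/2}\lambda_n\to\infty$ then forces $v_{nj}\CiP\infty$. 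Setting $\Scal=\model_n^\dagger$ and $\Scal^c=(\model_n^\dagger)^c$ verifies \eqref{eq:scal-def} and closes part (i).

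For the ADP weights (part (ii)) I would use $\hat\bw\subj=|\hat\bbeta\subj|^{-\kappa}$ and split only on $\bbeta_0\subj$. When $\bbeta_0\subj\ne 0$ the weight converges to a constant and $v_{nj}\asymp n^{-1/2}\lambda_n\CiP 0$. When $\bbeta_0\subj=0$, the decomposition gives $\hat\bbeta\subj=O_p(\max\{r_{2nj},n^{-1/2}\})$, so with $R^{(3)}_{nj}=n^{\min\{c_{2j},1/2\}}=n^{c_{2j}}$ we get $v_{nj}\asymp n^{\kappa c_{2j}-1/2}\lambda_n$. The two boundary situations $\bbeta_{0n}\subj=0$ exactly and $c_{2j}=1/2$ both give $R^{(3)}_{nj}=\sqrt n$ and $v_{nj}\asymp n^{(\kappa-1)/2}\lambda_n$, which $n^{(\kappa-1)/2}\lambda_n\to\infty$ sends to $\infty$; these are exactly the $j\notin\model_n^\ddagger$. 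Conversely, for $j$ with $\bbeta_0\subj=0$ but $c_{2j}<1/2$ (the slowly-shrinking local mediators kept in $\model_n^\ddagger$) I would write $c_{2j}=(\kappa-\delta)/(2\kappa)$ with $\delta=\kappa(1-2c_{2j})\in(0,\kappa)$, so that $v_{nj}\asymp n^{(\kappa-1-\delta)/2}\lambda_n$, and the ``for every $0<\delta<\kappa$'' hypothesis $n^{(\kappa-1-\delta)/2}\lambda_n\to 0$ sends $v_{nj}\CiP 0$. Taking $\Scal=\model_n^\ddagger$ again matches \eqref{eq:scal-def}. The contrast with part (i) becomes transparent: a genuine local mediator with $\balpha_0\subj\ne0$ but $\bbeta_{0n}\subj\asymp n^{-1/2}$ has $c_{2j}=1/2$, so ADP deselects it (Case~3/4 boundary) while PRD, which sees $\balpha_0\subj\ne0$ through $R^{(3)}_{nj}$, retains it.

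The main obstacle I anticipate is the bookkeeping in the all-zero case for the product weight: one must cleanly separate the genuinely null coordinates, where the product of two mean-zero $\sqrt n$-consistent estimators concentrates at the faster $n^{-1}$ rate and produces the crucial $n^{(2\kappa-1)/2}\lambda_n$ divergence, from the locally-perturbed-but-limit-zero coordinates, where the drift dominates so that $R^{(4)}_{nj}=n^{c_{1j}+c_{2j}}\le\sqrt n$ keeps $v_{nj}\to 0$; and then confirm that a single pair of $\lambda_n$ conditions simultaneously retains all of $\model_n^\dagger$ and drops its complement. The analogous delicate point for ADP is verifying that the strict inequality $c_{2j}<1/2$ is precisely what reconciles $n^{(\kappa-1)/2}\lambda_n\to\infty$ (needed to drop the boundary cases) with $v_{nj}\to 0$ for the retained shrinking mediators, which is exactly the role played by quantifying over all $\delta\in(0,\kappa)$.
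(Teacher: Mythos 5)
Your proposal is correct and takes essentially the same route as the paper's proof: both reduce to \Cref{lem:rates-determine-behavior} and then verify, case by case via the pilot-estimate decompositions and the rates $R^{(2)}_{nj}$, $R^{(3)}_{nj}$, $R^{(4)}_{nj}$, whether $v_{nj}=n^{-1/2}\lambda_n\hat\bw\subj[j]$ tends to $0$ or to $\infty$ under the stated conditions on $\lambda_n$. If anything you are more explicit than the paper at the two delicate points you flag yourself---separating the genuinely null coordinates, where $\hat\balpha\subj[j]\hat\bbeta\subj[j]=O_p(n^{-1})$ produces the $n^{(2\kappa-1)/2}\lambda_n$ divergence, from the drifting-but-limit-zero ones, and writing $\delta=\kappa(1-2c_{2j})$ to match the ADP hypothesis---so no changes are needed.
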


Under the conditions of this theorem, the proposed product weights \eqref{eq:prd-weight} select a conservative set of mediators. Variables included in the set include all variables with at most one of $\balpha_{0n}\subj[j],\bbeta_{0n}\subj[j]$ equal to zero. However, the stated conditions for the ADP weights only allow identification of $\model_n^\ddagger$. Variables with coefficients $\bbeta_{0n}\subj[j]$ that are too small will tend to be removed from the selected set, even if $\balpha_{0n}\subj[j]\bbeta_{0n}\subj[j]$ is on the same order as other variables.

It may appear unintuitive that the standard adaptive lasso cannot distinguish between coefficients roughly of size $n^{-1/2}$, while the PRD weights can. The crucial difference between the PRD and ADP weights in this theorem is the size of $n^{(\kappa - 1)/2}\lambda_n$.
For the ADP weights, the requirement $n^{(\kappa - 1)/2}\lambda_n \rightarrow \infty$ to de-select the noise variables replicates the results of \cite{zouAdaptiveLassoIts2006a}. This comes with the side-effect of also estimating $\bbeta_{0n}\subj[j] \asymp n^{-1/2}$ as exactly zero. 
In other words, this makes it possible to de-select true mediators based on how small $\bbeta_{0n}\subj[j]$ is, regardless of the magnitude of $\balpha_{0n}\subj[j]\bbeta_{0n}\subj[j]$.
The strength of the PRD weights is that the crucial requirement for removing completely irrelevant variables $\balpha_{0n}\subj[j]=\bbeta_{0n}\subj[j]=0$ is $n^{(2\kappa - 1)/2}\lambda_n \rightarrow\infty$, as verified in Theorem \ref{thm:oracle-selection-normality}, 
which we may allow without impacting performance on the coefficients with $\balpha_{0n}\subj[j]\bbeta_{0n}\subj[j] \asymp n^{-1/2}$.
The price paid for this flexibility is conservativeness: this latter rate requirement is reported in Theorem \ref{thm:oracle-selection-normality} as ensuring in the fixed-parameter case that non-mediators with only one coefficient nonzero are selected.

% ----------------------------------------------------------------------
% 	Simulation
% ----------------------------------------------------------------------
\section{Simulation Studies}\label{sec:simulation}

We demonstrate the performance of these methods through extensive simulation studies. We draw attention to the quality of our confounding control under various true-model scenarios and modeling techniques, and the accuracy of our estimated causal effects in comparison with the traditional adaptive lasso. Since selection may improve interpretability, we also compare this established method with our proposed weights on the basis of selection performance. Our summarized findings are that the product weights match the performance of the standard method in the most favorable scenarios while providing better selection performance in the least favorable settings.

A full description of our simulation setting is given in \Cref{sec:app-sim-results}; we present the salient details here.
Data were generated from the model \eqref{eq:med-model} with various configurations. We classify the scenarios based on four factors: the sample size $n$, the dimension $p$, the linear or nonlinear specification of the nuisance functions, and the size of the $\balpha_0$ and $\bbeta_0$. We simulated 1000 data sets of sizes $n=500,1000,2000,$ and $4000$, and two different dimensions were chosen for computational efficiency: $p=10$ and $p=60$. The different functional confounding forms are given below:
\begin{align*}
  \mu_{D0}(\bX) &= \left\lbrace\begin{array}{lr}
    \expit\left\{ 0.8(X_1 + X_2) \right\} &  Linear  \\
    \expit\left\{ 0.8(X_1 X_2 + X_2) \right\} &  Nonlinear
  \end{array}\right. \\
  \psi_M(\bX) &= \left\lbrace\begin{array}{lr}
    X_1 + X_2 - X_3 &  Linear  \\
    X_1^2 + X_2 - X_3 &  Nonlinear
  \end{array}\right. \\
  \psi_Y(\bX) &= \left\lbrace\begin{array}{lr}
    2(X_1 - 0.5) + X_2 + 2X_3 &  Linear  \\
    2(X_1 - 0.5)^2 + X_2 + 2X_3 &  Nonlinear.
  \end{array}\right.
\end{align*}
The linear or nonlinear confounding scenario is abbreviated in the order of $\mu_{D0}$, $\bm{\psi}_M$, and $\psi_Y$. 
For example, the simulation scenario using the linear form of $\mu_{D0}$ and nonlinear forms of $\bm{\psi}_M$ and $\psi_Y$ is denoted as LNN. Finally, we used a ``Large'' and ``Small'' scenario for the size of the coefficients. In the Large scenario, the coefficients are fixed with sample size, whereas in the Small scenario the local process of \Cref{sec:local-asymp} is used: each true mediator has contribution $4n^{-1/2}$ to $\nie_{\model^*}$ with some of the corresponding $\bbeta_{0n}\subj[j]$ of asymptotic size $n^{-1/2}$. Due to the symmetry of our proposed weights with respect to $\breve{\balpha}\subj[j]$ and $\breve{\bbeta}\subj[j]$, a ``Small Alpha'' scenario with $\balpha_{0n}\subj[j]=O(n^{-1/2})$ was considered supplemental, along with a the ``Large'' scenario with $p=60$.
These were run in a less-extensive LNN setting for sizes $n=1000,2000$ in order to reduce the computational load.
Only three mediators exist in each scenario with $\model^* = \{1,2,3\}$.
In these simulation settings, the mediation effects for each of $\model^*$ and $\model^F$ are numerically equivalent: i.e. $\nie_{\model^F}=\nie_{\model^*}$ and $\nde_{\model^F}=\nde_{\model^*}$. See \Cref{rem:natural-effect-submodel-equal} for more details. This allows direct comparison of our selection-based methods with $\model^F-$based methods.

Two of the compared techniques included simultaneous mediator selection and effect estimation, while three estimated the mediation effects under a fixed, rather than selected, model. The simultaneous selection and estimation methods included the PRD and ADP weighted lasso estimators. For these two estimators, the tuning parameters $\lambda_n$ and $\kappa$ were chosen by 10-fold cross-validation.
The fixed-model estimation methods used the unpenalized estimators of \Cref{thm:stage1-can} with the full model $\model^F$ and true model $\model^*$, respectively. 
All of the previous techniques used Robinson's transformation; the final method used the model $\model^*$ along with parametric linear models for $\psi_Y$ and $\bm{\psi}_M$. Cross-fitting was performed by the \texttt{SuperLearner R} package \citep{polleySuperLearnerSuperLearner2019c} 
using \texttt{SL.lm} or \texttt{SL.glm} for continuous or binary variables, respectively \citep{rlang}, \texttt{SL.earth} for Multivariate Adaptive Regression Splines \citep{milborrowEarthMultivariateAdaptive2020a}, and an implementation of Generalized Additive Models using the \texttt{mgcv} package in R \citep{woodFastStableRestricted2011}. Three separate specifications were included for the \texttt{mgcv} learner; all used thin-plate splines and $m=3$, though they differed in terms of the spline basis dimension $k=2,3,$ and $5$.
This library of learners was chosen to facilitate conducting extensive simulations while incorporating flexible methods, as recommended in \Cref{sec:conf-control}.

The variable selection performance of the PRD and ADP weights was evaluated in each of the Large and Small coefficient settings. These results are presented in Table \ref{tab:var-sel-perf}. Two criteria are displayed: the median number of non-mediators selected by the method (MN) and the proportion of the selected models containing $\model^*$ (PC). This latter quantity is important due to scientific, interpretive, or other errors that may result from ignoring true mediators. It is desirable to minimize MN and maximize PC.
In the Large setting, all methods behaved somewhat similarly with respect to both MN and PC. However, in the Small setting, the ADP weights performed poorly due to small values of $\bbeta_0(\model^*)$ which led to true mediators being missed. In contrast, the PRD weights include $\model^*$ nearly 3 times as often, despite each mediator being relatively weak. The proportion of simulations containing the true model also monotonically increased with sample size, while that of the ADP stayed roughly the same. On median, one non-mediator was selected by the ADP, compared to zero by the PRD.

\begin{table}
  \caption{
    \label{tab:var-sel-perf}  
    Performance of variable selection methods in the LLL, LNN, and NNN settings, $p=10$. PC represents the proportion of selected models containing $\model^*$, whereas MN represents the median number of non-mediators selected.
  }
  \centering
   {
\begin{tabular}[t]{ccccccccc}
  \toprule
  \multicolumn{3}{c}{ } & \multicolumn{2}{c}{LLL} & \multicolumn{2}{c}{LNN} & \multicolumn{2}{c}{NNN} \\
  \cmidrule(l{3pt}r{3pt}){4-5} \cmidrule(l{3pt}r{3pt}){6-7} \cmidrule(l{3pt}r{3pt}){8-9}
  Coefficients & n & Weight Version & PC & MN & PC & MN & PC & MN\\
  \midrule
   &  & adaptive & 1.00 & 1.00 & 1.00 & 1.00 & 1.00 & 1.00\\
  
   & \multirow{-2}{*}{\textbf{500}} & product & 1.00 & 0.00 & 1.00 & 1.00 & 1.00 & 1.00\\
  \cmidrule{2-9}
   &  & adaptive & 1.00 & 1.00 & 1.00 & 1.00 & 1.00 & 1.00\\
  
   & \multirow{-2}{*}{\textbf{1000}} & product & 1.00 & 0.00 & 1.00 & 0.00 & 1.00 & 0.00\\
  \cmidrule{2-9}
   &  & adaptive & 1.00 & 1.00 & 1.00 & 1.00 & 1.00 & 1.00\\
  
   & \multirow{-2}{*}{\textbf{2000}} & product & 1.00 & 1.00 & 1.00 & 1.00 & 1.00 & 1.00\\
  \cmidrule{2-9}
   &  & adaptive & 1.00 & 1.00 & 1.00 & 1.00 & 1.00 & 1.00\\
  
  \multirow{-8}{*}{\textbf{Large}} & \multirow{-2}{*}{\textbf{4000}} & product & 1.00 & 0.00 & 1.00 & 0.00 & 1.00 & 1.00\\
  \cmidrule{1-9}
   &  & adaptive & 0.14 & 1.00 & 0.19 & 1.00 & 0.22 & 1.00\\
  
   & \multirow{-2}{*}{\textbf{500}} & product & 0.50 & 0.00 & 0.56 & 0.00 & 0.60 & 0.00\\
  \cmidrule{2-9}
   &  & adaptive & 0.15 & 1.00 & 0.19 & 1.00 & 0.22 & 1.00\\
  
   & \multirow{-2}{*}{\textbf{1000}} & product & 0.55 & 0.00 & 0.58 & 0.00 & 0.62 & 0.00\\
  \cmidrule{2-9}
   &  & adaptive & 0.12 & 1.00 & 0.16 & 1.00 & 0.17 & 1.00\\
  
   & \multirow{-2}{*}{\textbf{2000}} & product & 0.57 & 0.00 & 0.61 & 0.00 & 0.65 & 0.00\\
  \cmidrule{2-9}
   &  & adaptive & 0.14 & 1.00 & 0.16 & 1.00 & 0.20 & 1.00\\
  
  \multirow{-8}{*}{\textbf{Small}} & \multirow{-2}{*}{\textbf{4000}} & product & 0.64 & 0.00 & 0.66 & 0.00 & 0.72 & 0.00\\
  \bottomrule
  \end{tabular}
  }
\end{table}

Coverage rates for $\nde_{\model^*}$ and $\nie_{\model^*}$ using the proposed bootstrap were calculated for both of the selection-based procedures. To provide a fixed-model control, we compared these coverage rates to the cross-fitted model using only the mediators in $\model^*$ under the standard nonparametric bootstrap inference procedure, which we label the ``oracle'' method. Remarkably, the proposed PRD weights achieved much higher coverage rates than the ADP weights.
In the Large coefficient setting, the PRD-based confidence interval coverage tended to closely match that of the $\model^*$-based estimator. The ADP-based estimator resulted in lower coverage rates in the smaller samples, which improved with increasing $n$. These differences were amplified in the Small coefficient setting: the ADP-based coverage was around 85\%, lower than the $\model^*$-based estimator, and did not substantially increase with $n$. The PRD-based coverage was slighly lower than that of the $\model^*$-based estimator, but tended to be higher than 90\%. The difference between these latter two estimators' coverage rates vanished with increased $n$, resulting in coverage near the nominal rate at the highest sample sizes. An example of this trend is presented in \Cref{fig:coverage-example} for the Large and Small coefficient settings in the LNN scenario.

\begin{figure}
  \centering
  \includegraphics[page=1,width=\textwidth]{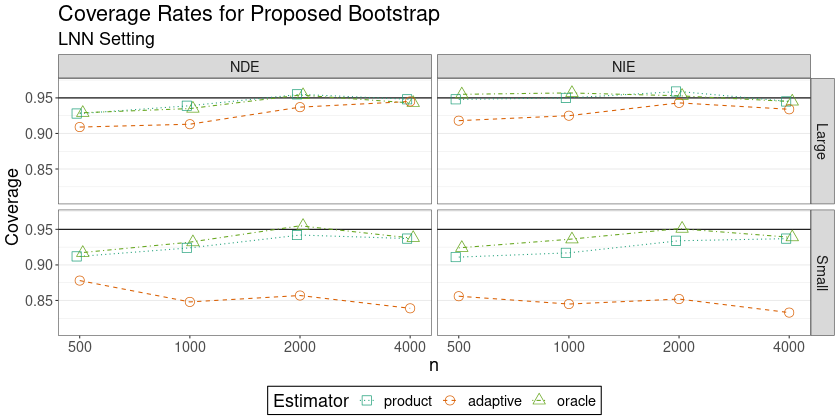}
  \caption[Natural effect bootstrap coverage rates in the Large coefficient case, $p=10$]{Coverage rates of 95\% confidence intervals for $\nde_{\model^*}$ and $\nie_{\model^*}$ in the LNN setting. Plotted are both the Large and Small coefficient settings. The proposed product estimator is compared to the standard adaptive estimator, as well as the fixed-$\model^*$ method (oracle).
  }
  \label{fig:coverage-example}
\end{figure}

A more complete set of figures and tables for these discussed settings are available in \Cref{sec:app-sim-results}, all using $p=10$, demonstrating the favorable properties of the proposal with respect to both bias and coverage rates in each of the discussed scenarios. The bias results are presented in \Cref{tab:sim-bias} and \Cref{fig:bias-large,fig:bias-small}, while the complete coverage results are presented in \Cref{fig:coverage-large,fig:coverage-small}.
The parametric linear model attained favorable bias when the parametric assumption held in scenario LLL. Conversely, the bias was substantial in settings LNN and NNN. 
The remaining methods had similar, small levels of bias. The proposed estimator most closely matched the bias performance of the fixed-$\model^*$ estimator.
Similar results for the supplemental $p=60$ and Small-$\balpha_0$ settings may be found in \Cref{tab:app-supp-results} in \Cref{sec:app-manyp-results} and align with many of the previously discussed trends.

% ----------------------------------------------------------------------
% 	Application
% ----------------------------------------------------------------------
\section{Data Application}

We illustrate our method using data from the `Student/Teacher Achievement Ratio' (STAR) study \cite{wordStudentTeacherAchievement1990}. In this study, 11,600 students were randomized to one of three class size configurations: small class size of 13-17, regular size of 22-25, and regular size with a teachers aide. Individual students maintained the randomized class size from kindergarten to grade 3. The STAR study concluded that small class sizes tended to improve student performance on standardized tests.

One interesting question that may be studied is whether small class sizes affect the developmental trajectory of a student. Small class sizes may improve later test scores via improvements in earlier skills. Identifying the pathways through which the causal effect travels may improve the understanding of how childhood development relates to standardized test scores. From an educational policy perspective, such an analysis may also help to identify particular skills to be targeted for intervention when the goal is the improvement of standardized test scores.

A previous analysis of the STAR study found no significant difference with the addition of a teacher's aide at regular class sizes \cite{kruegerExperimentalEstimatesEducation1999c}. Therefore, we group these two configurations together, yielding a binary ``treatment'' of small class size compared to ``control'' of regular class size as pursued by \cite{ertefaieDiscoveringTreatmentEffect2018}. We study the effect of class size on student performance on grade 8 standardized maths tests via several candidate mediators. We performed this analysis on two separate outcome measures:
Comprehensive Tests of Basic Skills (CTSB) standardized test scores in Math Computation, and Concepts \& Applications. Mediators were chosen to include 60 measures of academic performance and engagement measured from grades K-3. Confounding variables were gender, race, free/reduced lunch status, and school urbanicity status.

As recommended in \Cref{sec:conf-control}, the SuperLearner library contained a variety of methods: a linear model (\texttt{SL.lm}), grand mean (\texttt{SL.mean}), Generalized Additive Model (\texttt{SL.gam}), Lasso regression (\texttt{SL.glmnet}), and Multivariate Adaptive Regression Splines (\texttt{SL.earth}) with default hyperparameters. Two specifications of a \texttt{SL.randomForest} model \citep{liawClassificationRegressionRandomForest2002c} were included with minimum node sizes of 5 and 9, 1000 trees, and 3 variables randomly selected. The \texttt{SL.xgboost} model \citep{chenXGBoostScalableTree2016} was included with a maximum depth of 3 and 100 trees. All other hyperparameters were left at default values. SuperLearner was used to estimate $\mu_{Y0}$ and $\bmu_{M0}$. Since the binary treatment was randomized, the sample mean in each class size category was used as a correctly specified parametric model, thus satisfying the $\Vert\hat\mu_D - \mu_{D0} \Vert_{P_0,2}$ portion of Assumption \ref{asu:rate-nonpar}.

We used the proposed estimators, derived from equations \eqref{eq:est-med-lasso-obj} and \eqref{eq:alpha}. We compared the PRD and ADP estimators to the model using all 60 candidates (FULL), equivalently setting $\lambda_n=0$ in \eqref{eq:est-med-lasso-obj}, as well as a model without selection that also imposed a parametric linear model on the $\psi_Y$ and $\bm{\psi}_{M}$ components of \eqref{eq:med-model}. Confidence intervals based upon the proposed bootstrap inference method with 1000 bootstrap replications were compared to those based upon the oracle asymptotic distribution and the Delta Method, all at the 95\% level.
The tuning parameters $\lambda_n$ and $\kappa$ were chosen by cross-validation using a $\kappa$ grid of $(0.5,1,2,3)$, and a $\lambda_n$ grid of $n^{1/4}2^{\mathcal{G}}$, where the function $x \mapsto 2^{x}$ is applied to each element of $\mathcal{G}$, a 401-element evenly-spaced grid from $-2$ to 10. 

Point estimates, 95\% CIs, and selected model sizes are reported for all methods in Table \ref{tab:data-result}. 
Statistically significant direct effects were not found across any estimation method, inference technique, or outcome. For the Math Computation outcome, the proposed selection method found a significant natural direct effect using the proposed bootstrap methodology, while the ADP-selected model was not found to be significant. Notably, the naive Delta method confidence intervals did not cover zero for both post-selection estimators, while they did cover zero for the semiparametric full model (FULL). The proposed bootstrap applied to the ADP estimator resulted in a much larger confidence interval compared to the naive Delta method. In contrast, the proposed intervals applied to the PRD estimator resulted in a slight enlargement of the intervals. The PRD method selected a model nearly half the size of that of the ADP (20 \textit{versus} 38) while the estimates of $\nie_{\model^*}$ and $\nde_{\model^*}$ were identical up to the second decimal.

Moving to the Math Concepts \& Applications outcome, both the PRD and ADP methods found significant natural indirect effects for their respective selected models according to the proposed bootstrap technique. As before, the PRD estimator selected a smaller model than that of ADP (24 \textit{versus} 27), although the ADP estimator has a smaller $\niehat$ and larger $\ndehat$ compared to the PRD. This seems to indicate that the PRD estimator was better able to select variables in the mediating pathway, as the ADP model relegated these effects to the direct pathway. Comparing the two models, the ADP method included measures of self-concept, motivation, and absences, while the PRD method included more measures of word study skills, reading comprehension, vocabulary, and listening skills. Since the natural direct effects were found not to be significantly different from zero, this suggests that early improvements in the measures identified in \Cref{tab:mediator-cand} may completely mediate the effect of small class sizes on Grade 8 Mathematics standardized test performance.

\renewcommand{\tabcolsep}{4pt}
\begin{table}
  \caption{
    \label{tab:data-result}
    Estimates $\ndehat$ and $\niehat$ along with confidence intervals from the proposed post-selection bootstrap (Proposed CI) as well as the Delta Method (DM CI). Estimates from  the Robinson-transformed model using SuperLearner with the product estimator (PRD), standard adaptive lasso estimator (ADP), no selection (FULL) and the parametric linear model with no selection (LM) are presented.
    }
  % \resizebox*{0.9\textwidth}{!}{
  { \footnotesize
  \begin{tabular}{ccccccccc}
    \toprule
    Outcome & Method & Size & $\ndehat$ & Proposed CI & DM CI & $\niehat$ & Proposed CI & DM CI\\
    \midrule
      & PRD & 20 & -3.14 & (-8.46, 2.60) & (-8.83, 3.01) & 5.45 & (0.10, 10.47) & (0.12, 10.33)\\
    
      Math & ADP & 38 & -3.14 & (-8.06, 2.60) & (-9.20, 2.77) & 5.45 & (-0.51, 10.72) & (0.28, 10.79)\\
    
      Computation & FULL & 60 & -3.06 & -- & (-9.27, 3.15) & 5.38 & -- & (-0.08, 10.83)\\
    
     & LM & 60 & -3.16 & -- & (-9.32, 3.00) & 5.32 & -- & (-0.06, 10.71)\\
    \cmidrule{1-9}
      & PRD & 24 & 0.26 & (-3.80, 4.18) & (-3.95, 4.66) & 5.24 & (0.91, 9.36) & (0.97, 9.30)\\
    
      Math Concepts & ADP & 27 & 0.86 & (-3.24, 4.89) & (-4.04, 4.67) & 4.63 & (0.33, 8.63) & (0.99, 9.37)\\
    
      \& Applications & FULL & 60 & 0.08 & -- & (-4.40, 4.56) & 5.42 & -- & (1.03, 9.80)\\
    
     & LM & 60 & 0.15 & -- & (-4.29, 4.59) & 5.02 & -- & (0.68, 9.36)\\
    \bottomrule
  \end{tabular}
  }
\end{table}
\renewcommand{\tabcolsep}{6pt}

\begin{table}%[htb!]
  \caption{ 
    \label{tab:mediator-cand} 
  Mediators selected by the proposed methodology for the two Grade 8 Mathematics test scores. Since most mediators were measured in multiple grades, the two rightmost columns list the grades which were selected as important in mediating the effect of small class size on the indicated outcome.}
  \centering
  {\footnotesize
  \begin{tabular}{ccc}
    \toprule
    \multicolumn{1}{c}{ } & \multicolumn{2}{c}{Grades Selected (K-4)} \\
    \cmidrule(l{3pt}r{3pt}){2-3}
    Mediator & Math Computation & Math Concepts \& Applications\\
    \midrule
    Days of absence & 3 & 3\\
    Total math scaled score SAT & 1,2,3 & K,1,2\\
    Total listening scale score SAT & 1,3 & K,1,3\\
    Word study skills scale score SAT & -- & K,1,2,3\\
    Vocabulary scale score SAT & 3,4 & 3\\
    \addlinespace
    Social science scale score SAT & -- & 4\\
    Math concept of numbers scale score SAT & -- & 3\\
    Math applications scale score SAT & 3 & 3\\
    Reading raw score BSF & 1,2 & 1,2\\
    Math raw score BSF & 1,2 & 2\\
    \addlinespace
    Reading percent objectives mastered BSF & 1 & 2,3\\
    Math percent objectives mastered BSF & -- & 2\\
    Total battery scale score CTBS & 4 & 4\\
    Reading comprehension scale score CTBS & 4 & 4\\
    Study skills scale score CTBS & 4 & 4\\
    \bottomrule
  \end{tabular}
  }
\end{table}

% ----------------------------------------------------------------------
% 	Discussion
% ----------------------------------------------------------------------
\section{Discussion}\label{sec:discussion}

In this paper, we developed causal mediation analysis in a selection setting. We defined the sparse natural effect estimands, which were motivated from both policy implementation and researcher interpretability perspectives. To target these estimands, we proposed a method for simultaneously selecting mediators among a candidate set, estimating the natural effects, and performing inference upon the selected set. Theoretical properties were established for this method in both a classical asymptotic framework, as well as in a local asymptotic framework. These theoretical properties were verified in extensive simulation studies.

In \Cref{rem:ch2-interventional-effects}, we acknowledged a possible interventional-effect interpretation of our resulting estimators, which would be valid even after removing \Cref{asu:4} from \Cref{asu:1to4-full,asu:1to4-mstar}.
However, \Cref{asu:2} still requires none of the variables indexed by $\model^{*c}$ to confound the mediator-outcome relationship. These issues are fundamental to all mediation analysis, as encoded in the causal assumptions. In terms of the observed data, this implies that certain causal or correlation patterns between $\bM(\model^{*})$, $\bM(\model^{*c})$, and $Y$ are disallowed, conditional on $\bX, D$. See \Cref{rem:app-violations-of-asu-2} in \Cref{sec:app-med-assumps} for discussion and an example of this issue.

The results presented in this paper raise additional interesting questions that merit future study.  For example,
our asymptotic framework assumed a fixed-dimensional set of mediators, as well as a binary treatment or exposure. Future work should examine establishing similar results, while allowing $p \rightarrow \infty$. For these large-$p$ settings, we might instead relax the requirement that the pilot estimators be $\sqrt{n}-$consistent in favor of a relaxed $\sqrt{n / \log p}-$consistency result, which may be achieved by L1- or L2-penalized estimators \citep[c.f.~][]{zouAdaptiveElasticnetDiverging2009,negahbanUnifiedFrameworkHighDimensional2012b}. The rates on $\lambda_n$ might be adjusted to account for this reduced rate.
This framework brings several practical and theoretical challenges, as the dimension of $\bmu_{Z0}$ increases with $p$.
It is unclear how this setting may impact bootstrap-based inference.

Theorem \ref{thm:local-oracle} presented a local asymptotic analysis to approximate the finite-sample situation of small coefficients, but did so by making assumptions on the rates at which the true parameters were perturbed.
Improvements upon these results could be made by making additional assumptions on the rates $r_{1nj},r_{2nj}$ and more finely-tuning the choice of $\lambda_n$ based on these assumptions. One implication of the existing requirement that $c_{1j} + c_{2j} \leq 1/2$ is that none of the $\nie_{\model^F}$ contributions are too small to be of smaller order than sampling error. In the absence of refinements, the presented rates for $\lambda_n$ are in some sense the most favorable to both weighting methods, in that they allow the inclusion of as many true mediators as possible. They also do not require that $\lambda_n$ be chosen in a way that makes use of the underlying size of the coefficients. 

Our proposed perturbation bootstrap leverages a first-order multiplier bootstrap distribution using a percentile approach. Studentization often can improve the quality of bootstrap approximation \citep{hallBootstrapEdgeworthExpansion1992b}, although, in the case of the $\nie_{\model^F}$, such a technique requires a variance estimate of the sum-product estimator $\hat\balpha \cdot \hat\bbeta$. Such a variance estimate might be furnished by a Delta Method approximation with a Heteroskedastic-Consistent covariance estimate, although the resulting performance is unclear. A recent proposal by \cite{dasPerturbationBootstrapAdaptive2019d} used an Edgeworth-expansion approach to arrive at a bias-correction to the \cite{minnierPerturbationMethodInference2011} approach. A higher-order study of our bootstrap approach with the Edgeworth-expansion may provide a bias correction, although it is unclear how this study might accommodate the nuisance parameter estimation via Robinson's transformation. Simulation studies did not show an improvement when applying the Das bootstrap to our setting. Developing a higher-order-correct bootstrap deserves further study.

\section{Acknowledgements}

The authors thank the Center for Integrated Research Computing (CIRC) at the University of Rochester for providing computational resources and technical support.

\bibliographystyle{rss}
\bibliography{mediation}

\begin{thebibliography}{50}
\expandafter\ifx\csname natexlab\endcsname\relax\def\natexlab#1{#1}\fi
\expandafter\ifx\csname url\endcsname\relax
  \def\url#1{\texttt{#1}}\fi
\expandafter\ifx\csname urlprefix\endcsname\relax\def\urlprefix{URL: }\fi

\bibitem[{Arcones(1998)}]{arconesAsymptoticTheoryMestimators1998}
Arcones, M.~A. (1998) Asymptotic theory for {{M-estimators}} over a convex
  kernel.
\newblock \textit{Econometric Theory}, \textbf{14}, 387--422.

\bibitem[{Berk et~al.(2013)Berk, Brown, Buja, Zhang, Zhao
  et~al.}]{berkValidPostselectionInference2013c}
Berk, R., Brown, L., Buja, A., Zhang, K., Zhao, L. et~al. (2013) Valid
  post-selection inference.
\newblock \textit{The Annals of Statistics}, \textbf{41}, 802--837.

\bibitem[{Biau(2012)}]{biauAnalysisRandomForests2012c}
Biau, G. (2012) Analysis of a random forests model.
\newblock \textit{The Journal of Machine Learning Research}, \textbf{13},
  1063--1095.

\bibitem[{Breiman(1996)}]{breimanStackedRegressions1996}
Breiman, L. (1996) Stacked regressions.
\newblock \textit{Machine Learning}, \textbf{24}, 49--64.

\bibitem[{Casella and Berger(2002)}]{casellaStatisticalInference2002}
Casella, G. and Berger, R.~L. (2002) \textit{Statistical {{Inference}}}.
\newblock Duxbury {{Advanced Series}}. {Pacific Grove, Calif}: {Duxbury/Thomson
  Learning}.

\bibitem[{Ch{\'e}n et~al.(2017)Ch{\'e}n, Crainiceanu, Ogburn, Caffo, Wager and
  Lindquist}]{chenHighdimensionalMultivariateMediation2017}
Ch{\'e}n, O.~Y., Crainiceanu, C., Ogburn, E.~L., Caffo, B.~S., Wager, T.~D. and
  Lindquist, M.~A. (2017) High-dimensional multivariate mediation with
  application to neuroimaging data.
\newblock \textit{Biostatistics}, \textbf{19}, 121--136.

\bibitem[{Chen and Guestrin(2016)}]{chenXGBoostScalableTree2016}
Chen, T. and Guestrin, C. (2016) {{XGBoost}}: {{A}} scalable tree boosting
  system.
\newblock In \textit{Proceedings of the 22nd {{ACM SIGKDD}} International
  Conference on Knowledge Discovery and Data Mining}, {{KDD}} '16, 785--794.
  {New York, NY, USA}: {ACM}.

\bibitem[{Chen(2007)}]{chenLargeSampleSieve2007}
Chen, X. (2007) Large {{Sample Sieve Estimation}} of {{Semi-Nonparametric
  Models}}.
\newblock In \textit{Handbook of {{Econometrics}}} (eds. J.~J. Heckman and
  E.~E. Leamer), vol.~6, 5549--5632. {Elsevier}.

\bibitem[{Chernozhukov et~al.(2018)Chernozhukov, Chetverikov, Demirer, Duflo,
  Hansen, Newey and Robins}]{chernozhukovDoubleDebiasedMachine2018f}
Chernozhukov, V., Chetverikov, D., Demirer, M., Duflo, E., Hansen, C., Newey,
  W. and Robins, J. (2018) Double/debiased machine learning for treatment and
  structural parameters.
\newblock \textit{The Econometrics Journal}, \textbf{21}, C1--C68.

\bibitem[{Das et~al.(2019)Das, Gregory and
  Lahiri}]{dasPerturbationBootstrapAdaptive2019d}
Das, D., Gregory, K. and Lahiri, S.~N. (2019) Perturbation bootstrap in
  adaptive lasso.
\newblock \textit{The Annals of Statistics}, \textbf{47}, 2080--2116.

\bibitem[{Dudoit and van {der
  Laan}(2005)}]{dudoitAsymptoticsCrossvalidatedRisk2005}
Dudoit, S. and van {der Laan}, M.~J. (2005) Asymptotics of cross-validated risk
  estimation in estimator selection and performance assessment.
\newblock \textit{Statistical Methodology}, \textbf{2}, 131--154.

\bibitem[{Ertefaie et~al.(2018)Ertefaie, Hsu, Page and
  Small}]{ertefaieDiscoveringTreatmentEffect2018}
Ertefaie, A., Hsu, J.~Y., Page, L.~C. and Small, D.~S. (2018) Discovering
  treatment effect heterogeneity through post-treatment variables with
  application to the effect of class size on mathematics scores.
\newblock \textit{Journal of the Royal Statistical Society: Series C (Applied
  Statistics)}, \textbf{67}, 917--938.

\bibitem[{Ertefaie et~al.(2021)Ertefaie, McKay, Oslin and
  Strawderman}]{ertefaieRobustQLearning2021}
Ertefaie, A., McKay, J.~R., Oslin, D. and Strawderman, R.~L. (2021) Robust
  {{Q-Learning}}.
\newblock \textit{Journal of the American Statistical Association},
  \textbf{116}, 368--381.

\bibitem[{Fan and Li(2001)}]{fanVariableSelectionNonconcave2001}
Fan, J. and Li, R. (2001) Variable selection via nonconcave penalized
  likelihood and its oracle properties.
\newblock \textit{Journal of the American Statistical Association},
  \textbf{96}, 1348--1360.

\bibitem[{Hall(1992)}]{hallBootstrapEdgeworthExpansion1992b}
Hall, P. (1992) \textit{The Bootstrap and {{Edgeworth}} Expansion}.
\newblock {Springer Science \& Business Media}.

\bibitem[{Huang and Zhang(2012)}]{huangEstimationSelectionAbsolute2012c}
Huang, J. and Zhang, C.-H. (2012) Estimation and selection via absolute
  penalized convex minimization and its multistage adaptive applications.
\newblock \textit{The Journal of Machine Learning Research}, \textbf{13},
  1839--1864.

\bibitem[{Huang and Pan(2016)}]{huangHypothesisTestMediation2016}
Huang, Y.-T. and Pan, W.-C. (2016) Hypothesis test of mediation effect in
  causal mediation model with high-dimensional continuous mediators.
\newblock \textit{Biometrics}, \textbf{72}, 402--413.

\bibitem[{Klaassen(1987)}]{klaassenConsistentEstimationInfluence1987}
Klaassen, C. A.~J. (1987) Consistent {{Estimation}} of the {{Influence
  Function}} of {{Locally Asymptotically Linear Estimators}}.
\newblock \textit{The Annals of Statistics}, \textbf{15}, 1548--1562.

\bibitem[{Knight and Fu(2000)}]{knightAsymptoticsLassotypeEstimators2000}
Knight, K. and Fu, W. (2000) Asymptotics for lasso-type estimators.
\newblock \textit{The Annals of Statistics}, \textbf{28}, 1356--1378.

\bibitem[{Krueger(1999)}]{kruegerExperimentalEstimatesEducation1999c}
Krueger, A.~B. (1999) Experimental {{Estimates}} of {{Education Production
  Functions}}.
\newblock \textit{The Quarterly Journal of Economics}, \textbf{114}, 497--532.

\bibitem[{Kuchibhotla et~al.(2020)Kuchibhotla, Brown, Buja, Cai, George and
  Zhao}]{kuchibhotlaValidPostselectionInference2020c}
Kuchibhotla, A.~K., Brown, L.~D., Buja, A., Cai, J., George, E.~I. and Zhao,
  L.~H. (2020) Valid post-selection inference in model-free linear regression.
\newblock \textit{The Annals of Statistics}, \textbf{48}, 2953--2981.

\bibitem[{Leeb and P{\"o}tscher(2006)}]{leebCanOneEstimate2006}
Leeb, H. and P{\"o}tscher, B.~M. (2006) Can one estimate the conditional
  distribution of post-model-selection estimators?
\newblock \textit{The Annals of Statistics}, \textbf{34}, 2554--2591.

\bibitem[{Leeb and P{\"o}tscher(2008)}]{leebSparseEstimatorsOracle2008}
--- (2008) Sparse estimators and the oracle property, or the return of
  {{Hodges}}' estimator.
\newblock \textit{Journal of Econometrics}, \textbf{142}, 201--211.

\bibitem[{Liaw et~al.(2002)Liaw, Wiener
  et~al.}]{liawClassificationRegressionRandomForest2002c}
Liaw, A., Wiener, M. et~al. (2002) Classification and regression by
  {{randomForest}}.
\newblock \textit{R news}, \textbf{2}, 18--22.

\bibitem[{Milborrow(2020)}]{milborrowEarthMultivariateAdaptive2020a}
Milborrow, S. (2020) Earth: {{Multivariate Adaptive Regression Splines}}.

\bibitem[{Minnier et~al.(2011)Minnier, Tian and
  Cai}]{minnierPerturbationMethodInference2011}
Minnier, J., Tian, L. and Cai, T. (2011) A perturbation method for inference on
  regularized regression estimates.
\newblock \textit{Journal of the American Statistical Association},
  \textbf{106}, 1371--1382.

\bibitem[{Negahban et~al.(2012)Negahban, Ravikumar, Wainwright and
  Yu}]{negahbanUnifiedFrameworkHighDimensional2012b}
Negahban, S.~N., Ravikumar, P., Wainwright, M.~J. and Yu, B. (2012) A unified
  framework for high-dimensional analysis of {{M-estimators}} with decomposable
  regularizers.
\newblock \textit{Statistical Science}, \textbf{27}, 538--557.

\bibitem[{Pearl(2001)}]{pearlDirectIndirectEffects2001}
Pearl, J. (2001) Direct and indirect effects.
\newblock In \textit{Proceedings of the Seventeenth Conference on Uncertainty
  in Artificial Intelligence}, {{UAI}}'01, 411--420. {San Francisco, CA, USA}:
  {Morgan Kaufmann Publishers Inc.}

\bibitem[{Polley et~al.(2019)Polley, LeDell, Kennedy and van~der
  Laan}]{polleySuperLearnerSuperLearner2019c}
Polley, E., LeDell, E., Kennedy, C. and van~der Laan, M. (2019)
  \textit{{{SuperLearner}}: {{Super Learner Prediction}}}.

\bibitem[{{R Core Team}(2020)}]{rlang}
{R Core Team} (2020) \textit{R: {{A Language}} and {{Environment}} for
  {{Statistical Computing}}}.
\newblock {Vienna, Austria}: {R Foundation for Statistical Computing}.

\bibitem[{Robins and
  Greenland(1992)}]{robinsIdentifiabilityExchangeabilityDirect1992}
Robins, J.~M. and Greenland, S. (1992) Identifiability and exchangeability for
  direct and indirect effects.
\newblock \textit{Epidemiology}, \textbf{3}, 143--155.

\bibitem[{Robinson(1988)}]{robinsonRootnconsistentSemiparametricRegression1988}
Robinson, P.~M. (1988) Root-n-consistent semiparametric regression.
\newblock \textit{Econometrica}, \textbf{56}, 931--954.

\bibitem[{Schaid and Sinnwell(2020)}]{schaidPenalizedModelsAnalysis2020}
Schaid, D.~J. and Sinnwell, J.~P. (2020) Penalized models for analysis of
  multiple mediators.
\newblock \textit{Genetic Epidemiology}, \textbf{44}, 408--424.

\bibitem[{Song et~al.(2020)Song, Zhou, Zhang, Zhao, Liu, Kardia, Roux, Needham,
  Smith and Mukherjee}]{songBayesianShrinkageEstimation2020}
Song, Y., Zhou, X., Zhang, M., Zhao, W., Liu, Y., Kardia, S. L.~R., Roux, A.
  V.~D., Needham, B.~L., Smith, J.~A. and Mukherjee, B. (2020) Bayesian
  shrinkage estimation of high dimensional causal mediation effects in omics
  studies.
\newblock \textit{Biometrics}, \textbf{76}, 700--710.

\bibitem[{van~der Vaart et~al.(2006)van~der Vaart, Dudoit and {van der
  Laan}}]{vaartOracleInequalitiesMultifold2006}
van~der Vaart, A.~W., Dudoit, S. and {van der Laan}, M.~J. (2006) Oracle
  inequalities for multi-fold cross validation.
\newblock \textit{Statistics \& Decisions}, \textbf{24}, 351--371.

\bibitem[{{van der Laan} and Petersen(2008)}]{vanderlaanDirectEffectModels2008}
{van der Laan}, M.~J. and Petersen, M.~L. (2008) Direct effect models.
\newblock \textit{The International Journal of Biostatistics}, \textbf{4},
  Article 23.

\bibitem[{{van der Laan} et~al.(2007){van der Laan}, Polley and
  Hubbard}]{vanderlaanSuperLearner2007}
{van der Laan}, M.~J., Polley, E.~C. and Hubbard, A.~E. (2007) Super
  {{Learner}}.
\newblock \textit{Statistical Applications in Genetics and Molecular Biology},
  \textbf{6}, Article 25.

\bibitem[{VanderWeele and
  Vansteelandt(2014)}]{vanderweeleMediationAnalysisMultiple2014}
VanderWeele, T. and Vansteelandt, S. (2014) Mediation analysis with multiple
  mediators.
\newblock \textit{Epidemiologic Methods}, \textbf{2}, 95--115.

\bibitem[{VanderWeele and
  Vansteelandt(2009)}]{vanderweeleConceptualIssuesConcerning2009}
VanderWeele, T.~J. and Vansteelandt, S. (2009) Conceptual issues concerning
  mediation, interventions and composition.
\newblock \textit{Statistics and its Interface}, \textbf{2}, 457--468.

\bibitem[{VanderWeele and
  Vansteelandt(2010)}]{vanderweeleOddsRatiosMediation2010}
--- (2010) Odds ratios for mediation analysis for a dichotomous outcome.
\newblock \textit{American Journal of Epidemiology}, \textbf{172}, 1339--1348.

\bibitem[{Vansteelandt and
  Daniel(2017)}]{vansteelandtInterventionalEffectsMediation2017}
Vansteelandt, S. and Daniel, R.~M. (2017) Interventional effects for mediation
  analysis with multiple mediators.
\newblock \textit{Epidemiology (Cambridge, Mass.)}, \textbf{28}, 258--265.

\bibitem[{Wood(2011)}]{woodFastStableRestricted2011}
Wood, S.~N. (2011) Fast stable restricted maximum likelihood and marginal
  likelihood estimation of semiparametric generalized linear models.
\newblock \textit{Journal of the Royal Statistical Society. Series B
  (Methodological)}, \textbf{73}, 3--36.

\bibitem[{Word et~al.(1990)}]{wordStudentTeacherAchievement1990}
Word, E. et~al. (1990) Student/{{Teacher Achievement Ratio}} ({{STAR}}),
  {{Tennessee}}'s {{K-3 Class Size Study}}. {{Final Summary Report}} 1985-1990.

\bibitem[{{Xiaohong Chen} and
  White(1999)}]{xiaohongchenImprovedRatesAsymptotic1999}
{Xiaohong Chen} and White, H. (1999) Improved rates and asymptotic normality
  for nonparametric neural network estimators.
\newblock \textit{IEEE Transactions on Information Theory}, \textbf{45},
  682--691.

\bibitem[{Zhang(2010)}]{zhangNearlyUnbiasedVariable2010}
Zhang, C.-H. (2010) Nearly unbiased variable selection under minimax concave
  penalty.
\newblock \textit{The Annals of Statistics}, \textbf{38}, 894--942.

\bibitem[{Zhang et~al.(2016)Zhang, Zheng, Zhang, Gao, Joyce, Yoon, Zhang,
  Schwartz, Just, Colicino et~al.}]{zhangEstimatingTestingHighdimensional2016}
Zhang, H., Zheng, Y., Zhang, Z., Gao, T., Joyce, B., Yoon, G., Zhang, W.,
  Schwartz, J., Just, A., Colicino, E. et~al. (2016) Estimating and testing
  high-dimensional mediation effects in epigenetic studies.
\newblock \textit{Bioinformatics}, \textbf{32}, 3150--3154.

\bibitem[{Zhao and Luo(2016)}]{zhaoPathwayLassoEstimate2016}
Zhao, Y. and Luo, X. (2016) Pathway {{Lasso}}: {{Estimate}} and select sparse
  mediation pathways with high dimensional mediators.
\newblock \textit{arXiv:1603.07749 [stat]}.

\bibitem[{Zheng and {van der
  Laan}(2011)}]{zhengCrossValidatedTargetedMinimumLossBased2011}
Zheng, W. and {van der Laan}, M.~J. (2011) Cross-{{Validated Targeted
  Minimum-Loss-Based Estimation}}.
\newblock In \textit{Targeted {{Learning}}: {{Causal Inference}} for
  {{Observational}} and {{Experimental Data}}}, 459--474. {New York, NY}:
  {Springer New York}.

\bibitem[{Zou(2006)}]{zouAdaptiveLassoIts2006a}
Zou, H. (2006) The adaptive lasso and its oracle properties.
\newblock \textit{Journal of the American Statistical Association},
  \textbf{101}, 1418--1429.

\bibitem[{Zou and Zhang(2009)}]{zouAdaptiveElasticnetDiverging2009}
Zou, H. and Zhang, H.~H. (2009) On the adaptive elastic-net with a diverging
  number of parameters.
\newblock \textit{The Annals of Statistics}, \textbf{37}, 1733.

\end{thebibliography}

\end{document}